\documentclass[aps,pra,twocolumn,superscriptaddress,nofootinbib]{revtex4-2}

\usepackage[utf8]{inputenc}
\usepackage{amsmath,amssymb,amsthm,mathtools}
\usepackage{braket}
\usepackage{graphicx}
\usepackage{booktabs}
\usepackage{times,color}
\usepackage{tikz}
\usetikzlibrary{arrows.meta,shapes.geometric}
\usepackage[hyperindex,breaklinks,colorlinks=true,linkcolor=black,citecolor=blue,urlcolor=blue]{hyperref}
\usepackage[capitalize,nameinlink]{cleveref}

\newtheorem{theorem}{Theorem}[section]
\newtheorem{lemma}[theorem]{Lemma}

\newtheorem{proposition}[theorem]{Proposition}
\theoremstyle{definition}

\theoremstyle{remark}

\newcommand{\E}{\mathbb E}
\newcommand{\Prob}{\mathbb P}
\newcommand{\supp}{\operatorname{supp}}

\newcommand{\MS}{\mathrm{MS}}
\newcommand{\Dset}{\mathfrak D}
\newcommand{\Dbin}{D_{\mathrm{bin}}}
\newcommand{\pos}[1]{\left[#1\right]_{+}}
\newcommand{\Cb}{\bar C}

\newcommand{\Cexp}{C_{\mathrm{exp}}}
\newcommand{\Cwc}{C_{\mathrm{wc}}}

\begin{document}

\title{Bounding Two-Way Average Communication Cost of Simulating Quantum Correlations}

\author{Kai-Siang Chen}
\affiliation{Department of Physics and Center for Quantum Frontiers of Research \& Technology (QFort), National Cheng Kung University, Tainan 701, Taiwan}

\author{Gelo Noel M. Tabia}
\affiliation{Hon Hai (Foxconn) Research Institute, Taipei, Taiwan}
\affiliation{Department of Physics and Center for Quantum Frontiers of Research \& Technology (QFort), National Cheng Kung University, Tainan 701, Taiwan}
\affiliation{Physics Division, National Center for Theoretical Sciences, Taipei 10617, Taiwan}

\author{Bo-An Tsai}
\affiliation{Department of Physics and Center for Quantum Frontiers of Research \& Technology (QFort), National Cheng Kung University, Tainan 701, Taiwan}

\author{Swati Kumari}
\affiliation{Department of Physics, Patna Science College, Patna University, Bihar 800005, India}
\affiliation{Department of Physics and Center for Quantum Frontiers of Research \& Technology (QFort), National Cheng Kung University, Tainan 701, Taiwan}

\author{Yeong-Cherng Liang}
\email{ycliang@mail.ncku.edu.tw}
\affiliation{Department of Physics and Center for Quantum Frontiers of Research \& Technology (QFort), National Cheng Kung University, Tainan 701, Taiwan}
\affiliation{Physics Division, National Center for Theoretical Sciences, Taipei 10617, Taiwan}

\begin{abstract}
Bell nonlocal correlations cannot be reproduced by local hidden-variable models without communication, making classical communication cost a natural quantitative measure of nonlocality. Although finite communication always suffices to simulate any correlation in a fixed finite Bell scenario, determining the minimum amount needed to (exactly) simulate any given nonlocal correlation remains challenging, especially in the average-cost setting. Here, we derive lower bounds on input-average communication cost from nonlocal games, allowing variable-length, fully interactive two-way protocols. When applied to parallel games, these bounds yield explicit finite quantum correlations for which the exact simulation cost exceeds any prescribed finite input-average communication budget. For $n$ parallel Magic-Square-winning correlations, we prove a lower bound of $n\log_2(3/2)$ bits and give a one-way communication protocol that attains this rate asymptotically. For the quantum correlation maximally winning the $n$-copy of the Clauser--Horne--Shimony--Holt nonlocal game, we obtain an input-average lower bound of approximately $0.04627n$ bits for exact simulation. Finally, we formulate a hierarchy of lower bounds and complementary upper bounds using deterministic correlations up to chosen communication costs. Both recover the exact input-average cost at their final levels, and if the bounds agree at lower levels, their common value yields the exact input-average cost without needing to consider all communication strategies.
\end{abstract}
\date{\today}
\maketitle

\section{Introduction}

Quantum correlations arising from entanglement cannot, in general, be reproduced by classical models without communication~\cite{Bell64,Brunner_RevModPhys_2014}. Quantifying the minimum classical communication required for their simulation therefore provides a natural way to characterize the strength of nonlocality. This question has been studied since the early works on communication-assisted local models~\cite{Maudlin92,Brassard1999,STEINER00,Cerf:PRL:00,Massar:PRA:01}. Remarkably, very little communication suffices in some important cases. For instance, a single classical bit is enough to reproduce exactly the correlations of a maximally entangled two-qubit state under arbitrary projective measurements~\cite{TonerBacon2003}.

A broad range of simulation protocols and communication lower bounds has since been developed. These include exact protocols with finite worst-case or average communication for various entangled states and measurement classes~\cite{Pironio:PRA:2003,Methot:EPJD:2004,Renner:PRL:23,Renner:Quantum:2023}, protocols for reproducing the joint correlators of dichotomic measurements~\cite{Degorre:PRA:2007,Regev10}, and approximate one-bit simulations using numerical and semianalytical methods~\cite{Sidajaya:Quantum:23}. Lower bounds on communication cost have also been derived from quantum violations of inequalities ~\cite{Pironio:PRA:2003,Barrett06,Vertesi:PRA:2009,Marton24,Sidajaya:PRA:2024}, while complementary studies have examined communication-assisted Bell scenarios through causal models~\cite{Brask:17,Vieira2025} or their corresponding correlation polytopes~\cite{BaconToner_2003,Maxwell_2014,ZC:Entropy:2019}. Meanwhile, the nonlocal-capacity framework characterizes the asymptotic cost of exact one-way simulation and gives finite-copy bounds under an entropic communication measure ~\cite{MontinaWolf:NJP:2016}. 
Related questions have also been studied in multipartite scenarios~\cite{Tessier:PRA:2005,Barrett:PRA:2007,Broadbent09,Bancal:AMP:10,Branciard:PRL:11,Branciard:PRL:2012,Brassard:IEEETIT:2016,Brassard:Entropy:2019}.

A complementary line of work uses communication-complexity
techniques~\cite{Buhrman:RMP:2010} and parallel-repetition
arguments~\cite{Raz1998Parallel,Holenstein2009,Rao2011}
to construct families of correlations whose classical simulation cost
grows with the size of the problem. Using the distributed
Deutsch--Jozsa problem~\cite{Buhrman98,DeutschJozsa92}, Brassard, Cleve, and Tapp showed that exact
classical simulation of certain correlations obtained from $k$ Bell
pairs requires $\Omega(2^k)$ bits for sufficiently large
$k$~\cite{Brassard1999}. The same underlying communication problem
also admits an average-communication lower bound with the same
asymptotic scaling for a suitable input
distribution~\cite{Buhrman98}. Nonlocal Hidden Matching provides
other families with growing one-way communication
cost~\cite{BarYossef:STOC:2004,Buhrman:TC:2012}. For parallel Magic
Square (MS) games~\cite{Mermin_90,Peres1990,Cabello01,Aravind02}, Jain and Kundu~\cite{JainKundu2025} (see also~\cite{HasegawaLeGallModanese2025})
established a bound on the fraction of games that can be won in terms of the amount of communication.
These results establish growing communication requirements, with the
focus primarily on asymptotic scaling. Such asymptotic results, however, do not by themselves guarantee that (large) communication is required for small
finite instances, as illustrated by the Deutsch--Jozsa
family~\cite{Brassard05}.

Explicit finite separations are also known. M\'arton \emph{et al.}
showed that two copies of the MS game rule out one bit of
worst-case communication~\cite{Marton24}. They further used parallel
Clauser--Horne--Shimony--Holt (CHSH) games
~\cite{CHSH_1969,Barrett:PRA:2002:Parallel} to construct examples excluding
prescribed worst-case communication budgets in a one-way model, with
an extension to a restricted two-way setting. Smaller finite Bell scenarios separating quantum correlations from one-way one-bit protocols have also been found~\cite{Sidajaya:PRA:2024}. These results motivate a related question for input-average communication. Given an arbitrary finite communication budget, can one exhibit a finite quantum correlation whose exact simulation necessarily exceeds that budget even when variable-length, fully interactive two-way protocols are allowed?

Here, we answer this question affirmatively by adapting communication-complexity techniques to the input-average setting. For the simulation of a perfect winning probability, we combine the standard rectangle method~\cite{KushilevitzNisan1997} with an entropy
bound for average communication~\cite{DietzfelbingerWunderlich2007} to
relate the input-average cost to the largest winning-rectangle probability. When perfect winning is relaxed, we instead use the
information about the inputs revealed by the exchanged
messages~\cite{ChakrabartiKondapallyWang2012,BravermanKo2018} to bound the
communication cost. We then extend both arguments to parallel games. For
perfect-winning games, we prove that the largest winning-rectangle probability is multiplicative under parallel repetition. In the non-perfect-winning case, we show
that the parallel bound can instead be related to a single-copy quantity.
Applied respectively to the MS and CHSH games, these results give linear input-average lower bounds and hence explicit finite-copy correlations whose exact simulation exceeds any prescribed finite input-average communication budget. For the MS game, we further construct a one-way simulation protocol achieving the same asymptotic rate.

For a specified finite correlation, the exact input-average cost
can alternatively be written as an optimization over convex
decompositions into correlations generated by deterministic
communication protocols~\cite{Pironio:PRA:2003}. However, storing
all such correlations quickly becomes prohibitively memory-demanding as the Bell scenario grows.
To this end, we introduce a linear programming (LP) hierarchy of lower bounds that includes only deterministic correlations
whose minimum communication costs do not exceed a chosen threshold. Complementarily, upper bounds can be obtained by optimizing over exact decompositions using deterministic correlations with up to some threshold
communication costs.
When the bounds agree, they determine the exact input-average
communication cost of the target correlation.

\section{Preliminaries}
\label{sec:preliminaries}
We consider a bipartite Bell scenario in which Alice and Bob
receive inputs $x\in X$ and $y\in Y$ and produce outputs
$a\in A$ and $b\in B$. We write
$\vec{P}:=\{P(a,b|x,y)\}_{a,b,x,y}$ for the full correlation
and $P(a,b|x,y)$ for its individual components.
A Bell-local correlation has the form
\begin{equation}
 P(a,b|x,y) = \sum_\lambda p(\lambda) P(a|x,\lambda)P(b|y,\lambda),
 \label{eq:local-main}
\end{equation}
where $\lambda\in\Lambda$ denotes (input-independent) classical shared randomness. Then, any correlation that cannot be explained by~\cref{eq:local-main} is said to be nonlocal. Quantum correlations are obtained by local measurements
on a shared bipartite state,
\begin{equation}
 P(a,b|x,y)
 =
 \operatorname{tr}\left[
 \rho_{AB}(M_{a|x}\otimes N_{b|y})
 \right],
\end{equation}
where $\{M_{a|x}\}_a$ and $\{N_{b|y}\}_b$ are the local
positive operator-valued measures for inputs $x$ and $y$, respectively. Both Bell-local and quantum correlations obey the
no-signaling conditions~\cite{Barrett:PRA:2005}, which require each party's marginal
output distribution to be independent of the other party's input:
\begin{subequations}\label{Eq:NS}
\begin{align}
	 \sum_b P(a,b|x,y) &= \sum_b P(a,b|x,y') &&\forall\,\,a,x,y,y',\label{Eq:NS:BtoA}\\
	 \sum_a P(a,b|x,y) &= \sum_a P(a,b|x',y)  &&\forall\,\,b,y,x,x'\label{Eq:NS:AtoB}.
\end{align}
\end{subequations}

In a classical simulation with communication, before producing their outputs, Alice and Bob may exchange messages regulated by shared randomness $\Lambda$ sampled independently of the inputs, with realization $\lambda$ and distribution $p_\Lambda$. 
A protocol $\mathcal R$ specifies how the parties use $\lambda$, communicate, and generate their outputs. Communication may be fully interactive, with messages exchanged in either direction over several rounds. Throughout, we use $\sigma\in\{\to,\leftrightarrow\}$ to denote the allowed communication model, where $\to$ denotes one-way communication from Alice to Bob and $\leftrightarrow$ denotes fully interactive two-way communication.

To describe how a protocol uses communication, we first consider its behavior for a fixed value $\Lambda=\lambda$. Without loss of generality, we may then take the protocol to be deterministic and assign a unique output pair to every input pair.
We denote the resulting correlation by $\vec{D}$, whose
components $D(a,b|x,y)$ are either zero or one. Since
communication is allowed, $\vec{D}$ can be nonlocal and even violate the no-signaling conditions of \cref{Eq:NS}.

For each fixed $\lambda$, we represent the resulting deterministic protocol by a finite rooted public tree known to both parties, as illustrated in
\cref{fig:public-tree-example}. Each internal node $v$ specifies
a sender, either Alice or Bob, and a message alphabet of size
$m(v)$. At each iteration, the sender chooses a symbol based on their {\em local input}
and the public transcript, i.e., the sequence of messages already
exchanged. This symbol determines which outgoing edge is followed.
Each leaf marks the end of communication.

\begin{figure}[t]
\centering
\begin{tikzpicture}[
    >=Stealth,
    internal/.style={
        draw, rounded corners, align=center,
        minimum width=1.8cm, minimum height=0.8cm
    },
    terminal/.style={
        draw, ellipse, align=center,
        minimum width=1.6cm, minimum height=0.8cm
    },
    edge label/.style={
        fill=white, inner sep=2pt, font=\small
    }
]
\node[internal, label=above:{Root}] (root) at (0,0)
    {Alice with\\$m=2$};

\node[terminal] (leaf0) at (-1.4,-1.5)
    {Leaf $\ell_0$\\$L=1$};

\node[internal] (bob) at (1.1,-1.5)
    {Bob with\\$m=2$};

\node[terminal] (leaf10) at (-0.3,-3)
    {Leaf $\ell_{10}$\\$L=2$};

\node[internal] (alice) at (2.1,-3)
    {Alice with\\$m=2$};

\node[terminal] (leaf110) at (1,-4.5)
    {Leaf $\ell_{110}$\\$L=3$};

\node[terminal] (leaf111) at (3.2,-4.5)
    {Leaf $\ell_{111}$\\$L=3$};

\draw[->] (root) -- node[edge label, midway] {$0$} (leaf0);
\draw[->] (root) -- node[edge label, midway] {$1$} (bob);

\draw[->] (bob) -- node[edge label, midway] {$0$} (leaf10);
\draw[->] (bob) -- node[edge label, midway] {$1$} (alice);

\draw[->] (alice) -- node[edge label, midway] {$0$} (leaf110);
\draw[->] (alice) -- node[edge label, midway] {$1$} (leaf111);
\end{tikzpicture}
\caption{An example of a rooted public tree. Upon learning her local input $x$, Alice begins by
sending one bit. Sending $0$ ends communication at $\ell_0$,
whereas sending $1$ passes the turn to Bob. Based on his local input $y$, Bob replies accordingly, which either
ends communication at $\ell_{10}$ when he sends $0$, or returns
the turn to Alice when he sends $1$. In the latter case, Alice's
final bit selects between $\ell_{110}$ and $\ell_{111}$.
The leaf subscripts record the complete message sequences.
Since every internal node has $m(v)=2$, the four paths leading to $\ell_0$, $\ell_{10}$, $\ell_{110}$, and $\ell_{111}$ have
communication costs of $1$, $2$, $3$, and $3$ bits, respectively.}
\label{fig:public-tree-example}
\end{figure}

We assign a cost of $\log_2m$ bits to a message with $m$
possible symbols. A root-to-leaf path $\ell$ therefore incurs the cost:
\begin{equation}
	 L(\ell) := \sum_{v\in\operatorname{path}(\ell)} \log_2m(v),
 \label{eq:path-main}
\end{equation}
where the sum runs over the internal nodes along the path.

For binary protocols, $m(v)=2$ at every internal node, so
$L(\ell)$ equals the number of transmitted bits. For a
general $m$-valued message, our convention assigns a cost
of $\log_2m$, while a fixed-length binary encoding requires
$\lceil\log_2m\rceil$ bits. Throughout, we remove unused branches and nodes with only one child, and assume that the randomness takes finitely many possible values. These restrictions do not affect the minimum communication costs defined below, as shown in Appendix~\ref{app:costs}.

For a protocol $\mathcal R$, let $L_{\mathcal R}(x,y,\lambda)$ be the communication cost given inputs $(x,y)$ and when $\Lambda=\lambda$, i.e., $L_{\mathcal R}(x,y,\lambda)$ is the $L(\ell)$ for the path $\ell$ corresponding to the given $(x,y,\lambda)$. Given an input distribution $\mu$ that assigns positive probability to every input pair, we define the input-average communication cost of simulating
$\vec{P}$ with allowed communication model $\sigma$ by
\begin{equation}
 \Cb_{\mu}^{\sigma}(\vec{P}) :=
 \min_{\mathcal R^{\sigma}\to\vec{P}}
 \E_{\mu,p_\Lambda}
 \bigl[L_{\mathcal R}(x,y,\lambda)\bigr],
 \label{eq:input-average-cost}
\end{equation}
where the minimum is over protocols with communication model $\sigma$ that reproduce $\vec{P}$ exactly on every input pair,
as indicated by $\mathcal R^{\sigma}\to\vec{P}$. The expectation, denoted by $\E$,
averages over inputs drawn from $\mu$ and random variables $\Lambda$
drawn from $p_\Lambda$. Throughout, expectation subscripts specify the distributions over which the average is taken.

Instead of computing~\cref{eq:input-average-cost} directly, we may obtain a lower bound by associating the target correlation $\vec{P}$ with a nonlocal game and asking how much communication is required to attain the winning probability associated with $\vec{P}$.
In a nonlocal game~\cite{Cleve04} $G$, Alice and Bob receive inputs drawn
from a distribution $\mu$ and return outputs subject to a
winning condition. Let $V(a,b,x,y)$ equal one when the
outputs $(a,b)$ win on inputs $(x,y)$ and zero otherwise.
The winning probability of a correlation $\vec{P}$ therefore is
\begin{equation}
 w_G(\vec{P})
 :=
 \sum_{a,b,x,y}
 \mu(x,y)V(a,b,x,y)P(a,b|x,y).
 \label{eq:game-winning-probability-main}
\end{equation}
Within this framework, pseudo-telepathy games~\cite{Brassard05} provide a natural setting for studying quantum-classical separations. They admit quantum strategies that win with certainty, whereas every local strategy loses with nonzero probability.

For a game $G$ with input distribution $\mu$ and required
winning probability $0<\omega\leq1$, let $\mathfrak R^{\sigma}$ denote
the class of protocols allowed by communication model $\sigma$
and win $G$ with probability at least $\omega$, averaged over
$\mu$ and $p_\Lambda$. The minimum input-average communication cost required to achieve this winning probability is
\begin{equation}
 \Cb_\mu^{\sigma,(\omega)}(G)
 :=
 \min_{\mathcal R\in\mathfrak R^{\sigma}}
 \E_{\mu,p_\Lambda}
 [L_{\mathcal R}(x,y,\lambda)],
 \label{eq:game-average-cost-main}
\end{equation}
while the corresponding minimum worst-case communication cost is
\begin{equation}
 C_{\mathrm{wc},\mu}^{\sigma,(\omega)}(G)
 :=
 \min_{\mathcal R\in\mathfrak R^{\sigma}}
 \lVert L_{\mathcal R}\rVert_{\infty,\mu},
 \label{eq:game-wc-cost-main}
\end{equation}
where
\begin{equation}
 \lVert L_{\mathcal R}\rVert_{\infty,\mu}
 :=
 \max_{\substack{(x,y)\in\supp(\mu)\\
 \lambda\in\supp(\Lambda)}}
 L_{\mathcal R}(x,y,\lambda)
\end{equation}
is the largest communication cost over all executions with
inputs in $\supp(\mu)$.

For perfect winning, we omit the winning-probability superscript
and write $\Cb_\mu^\sigma(G):=\Cb_\mu^{\sigma,(1)}(G)$ and
$C_{\mathrm{wc},\mu}^\sigma(G)
:=C_{\mathrm{wc},\mu}^{\sigma,(1)}(G)$.
Since winning a game with certainty does not require reproducing
a particular output distribution, any exact simulator
of a correlation $\vec{P}$ that wins $G$ perfectly also
provides a perfect strategy for the game, giving
\begin{equation}
 \Cb_\mu^\sigma(\vec{P})
 \geq
 \Cb_\mu^\sigma(G)
\end{equation}
and
\begin{equation}
 \Cwc^\sigma(\vec{P})
 \geq
 C_{\mathrm{wc},\mu}^\sigma(G),
\end{equation}
where
\begin{equation}
 \Cwc^\sigma(\vec{P})
 :=
 \min_{\mathcal R^{\sigma}\to\vec{P}}
 \max_{x,y,\lambda}
 L_{\mathcal R^\sigma}(x,y,\lambda).
 \label{eq:worst-case-cost}
\end{equation}

For each fixed $\sigma\in\{\to,\leftrightarrow\}$, since the average communication of any protocol is bounded by its maximum
 communication over all inputs and $\Lambda$, we have
\begin{equation}
 \Cb_\mu^\sigma(\vec{P})
 \leq
 \Cwc^\sigma(\vec{P}).
 \label{eq:cost-chain-main}
\end{equation}
Consequently, every input-average communication lower bound also applies to the corresponding worst-case cost. Likewise, since one-way communication protocols form a special case of fully interactive two-way communication protocols, every two-way lower bound also applies to one-way communication.

\section{Communication bounds from nonlocal games}
\label{sec:game-bounds-main}

\subsection{Perfect winning and winning rectangles}
\label{sec:winning-rectangles-main}

Consider a game in which every input pair in $X\times Y$ is allowed, and $\mu$ has full support, i.e., $\mu(x,y)\neq 0$ for all $x,y$. For every fixed value $\lambda$ of the protocol randomness, the protocol is deterministic, so each input pair $(x,y)$ follows a definite transcript to a leaf, cf.~\cref{fig:public-tree-example}. Let us group together all input pairs that reach the same leaf. Along the transcript leading to that leaf, each message from Alice helps Bob identify her possible inputs, while each message from Bob helps Alice identify his. Hence, the set of input pairs $\{(x,y)\}$ leading to any leaf forms a ``rectangle" $S\times T$, defined by the Cartesian product of all of Alice's inputs $x\in S$ and all of Bob's inputs $y\in T$ compatible with the transcript specific to the leaf.
Once $\lambda$ and the leaf are fixed, each party's final output within this rectangle depends only on its own input.

A rectangle $S\times T\subseteq X\times Y$ is called a
winning rectangle if there exist local response functions
$a:S\to A$ and $b:T\to B$ that win the game on every input
pair in $S\times T$. If the protocol wins with certainty,
the local output functions associated with each leaf satisfy
this condition, so every leaf corresponds to a winning
rectangle. See~\cref{fig:chsh-winning-rectangles} for an
illustration. Let us now define
\begin{equation}
 \alpha
 :=
 \max_{S\times T\ {\rm winning}}\mu(S\times T),
 \label{eq:alpha-main}
\end{equation}
where the maximization is carried out over all winning rectangles found in any rooted public tree resulting from all communication protocols. Thus, $\alpha$
is the largest probability of a winning rectangle under $\mu$.
For uniform inputs, it is simply the largest fraction of
input pairs covered by such a rectangle.

\begin{figure}[t]
\centering
\begin{tikzpicture}[x=1cm,y=1cm,font=\small,
  edge/.style={-{Stealth[length=2.2mm]},line width=.75pt},
  selected/.style={line width=1.1pt},
  every node/.style={inner sep=2pt}]

\node[draw,line width=.75pt,align=center,minimum width=1.3cm,
  minimum height=1.0cm] (alice) at (0,.4) {Alice\\sends $x$};
\draw[edge,color=blue] (alice.east)--(1.8,2.5)
  node[midway,above left] {$x=0$};  
\draw[edge,color=red] (alice.east)--(1.8,-1.7)
  node[midway,below left] {$x=1$};

\node at (3.0,.4) {Local strategies};
\draw[edge] (2.2,0.6)--(2.2,1.05);
\draw[edge] (2.2,0.2)--(2.2,-0.25);
\node at (3.3,.8) {$(a=0,b=0)$};
\node at (3.3,0) {$(a=0,b=y)$};
\node at (5.6,2.8) {{\color{blue}$S\times T=\{0\}\times\{0,1\}$}};
\node at (5.6,-2) {{\color{red}$S\times T=\{1\}\times\{0,1\}$}};

\begin{scope}[shift={(2,1.2)}]
  \draw[line width=.5pt] (0,0) rectangle (2,2);
  \draw[line width=.5pt] (1,0)--(1,2) (0,1)--(2,1);
  \foreach \p in {(.5,1.5),(1.5,1.5),(.5,.5)}
    \node[font=\large] at \p {$\checkmark$};
  \node[font=\large] at (1.5,.5) {$\times$};
  \draw[selected,color=blue] (.07,1.07) rectangle (1.93,1.93);
\end{scope}

\begin{scope}[shift={(2,-2.4)}]
  \draw[line width=.5pt] (0,0) rectangle (2,2);
  \draw[line width=.5pt] (1,0)--(1,2) (0,1)--(2,1);
  \foreach \p in {(.5,1.5),(.5,.5),(1.5,.5)}
    \node[font=\large] at \p {$\checkmark$};
  \node[font=\large] at (1.5,1.5) {$\times$};
  \draw[selected,dashed,color=red] (.07,.07) rectangle (1.93,0.93);
\end{scope}

\draw[edge,color=blue] (4.2,2.5)--(5.,1.5);
\draw[edge,color=red] (4.2,-1.7)--(5.,-0.7);
\begin{scope}[shift={(5.1,-.6)}]
  \draw[line width=.5pt] (0,0) rectangle (2,2);
  \draw[line width=.5pt] (1,0)--(1,2) (0,1)--(2,1);
  \foreach \p in {(.5,1.5),(1.5,1.5),(.5,.5),(1.5,.5)}
    \node[font=\large] at \p {$\checkmark$};
  \draw[selected,color=blue] (.07,1.07) rectangle (1.93,1.93);
  \draw[selected,dashed,color=red] (.07,.07) rectangle (1.93,0.93);
\end{scope}
\node at (6.1,1.65) {One-bit strategy};
\end{tikzpicture}
\caption{A one-bit communication strategy that wins the CHSH game with certainty. Alice and Bob receive inputs $x,y\in\{0,1\}$ and produce outputs $a,b\in\{0,1\}$, respectively. They win when $a\oplus b=xy$. Alice sends her input $x$ to Bob and always outputs $a=0$. Bob outputs $b=0$ if $x=0$ and $b=y$ if $x=1$. The two middle grids show the corresponding local strategies on the full input space. Rows correspond to Alice's inputs $x=0,1$ from top to bottom, and columns to Bob's inputs $y=0,1$ from left to right. Check marks and crosses indicate winning and losing input pairs, respectively. The solid blue and dashed red rectangles mark the {\em restricted} set of input pairs $S\times T$ induced by the communicated bit. Within each $S\times T$, the corresponding local strategy is chosen to satisfy the winning condition for all $(x,y)$, i.e., each $S\times T$ defines a winning rectangle. Together, these winning rectangles partition the input space, giving the one-bit strategy shown on the right.} \label{fig:chsh-winning-rectangles}
\end{figure}

Since each leaf of a perfectly winning protocol corresponds to a winning rectangle, no single leaf can cover more than an $\alpha$ fraction of the total input probability. As the leaves together cover all inputs, the communication must distinguish among sufficiently many leaves. This imposes a lower bound on the average communication, as quantified by the following lemma.

\begin{lemma}[Communication bound from winning rectangles]
\label{lem:single-rectangle-cost-main}
Let $G$ be a game with input set $X\times Y$ and a
full-support input distribution $\mu$. Then
\begin{equation}
 \Cb_\mu^{\leftrightarrow}(G)
 \geq
 \log_2\frac1{\alpha},
 \label{eq:single-rectangle-cost-main}
\end{equation}
where $\alpha$ is the largest probability of a winning
rectangle under $\mu$, as defined in \cref{eq:alpha-main}.
\end{lemma}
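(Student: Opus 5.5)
Fix any protocol $\mathcal R$ that wins $G$ with certainty. Since $\mu$ has full support and the protocol wins with probability one averaged over $\mu$ and $p_\Lambda$, for every $\lambda$ in the support of $\Lambda$ the deterministic protocol $\mathcal R_\lambda$ must win on every input pair. By linearity of expectation, $\E_{\mu,p_\Lambda}[L_{\mathcal R}]=\sum_\lambda p_\Lambda(\lambda)\,\E_\mu[L_{\mathcal R}(x,y,\lambda)]$. It therefore suffices to show, for each fixed deterministic perfectly winning protocol tree, that $\E_\mu[L(\ell(x,y))]\ge\log_2(1/\alpha)$, and then minimize over $\mathcal R$.

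For a fixed $\lambda$, I will use the rectangle structure described before the lemma. Each leaf $\ell$ of the tree corresponds to a set of inputs $R_\ell=S_\ell\times T_\ell$. These sets partition $X\times Y$, and each is a winning rectangle. Define $q_\ell:=\mu(R_\ell)$. Then $\sum_\ell q_\ell=1$ and $q_\ell\le\alpha$ for every $\ell$. The average cost is $\sum_\ell q_\ell L(\ell)$.

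The key step is a Kraft-type inequality for the tree with costs $L(\ell)=\sum_{v\in\mathrm{path}(\ell)}\log_2 m(v)$: I claim $\sum_\ell 2^{-L(\ell)}\le 1$. I will prove this by induction on the tree. At an internal node $v$ with $m(v)$ children, suppose each child subtree satisfies the inequality with its own path costs. Adding $\log_2 m(v)$ to every path multiplies each term by $1/m(v)$. Summing over the $m(v)$ children then gives at most $1$. (Equality holds when no branches are pruned; the pruning conventions only make the sum smaller.)

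With the Kraft inequality in hand, Gibbs' inequality gives
\begin{equation*}
 \sum_\ell q_\ell L(\ell)=\sum_\ell q_\ell\log_2\frac{1}{2^{-L(\ell)}}\ \ge\ \sum_\ell q_\ell\log_2\frac1{q_\ell}+\log_2\frac{1}{\sum_\ell 2^{-L(\ell)}}\ \ge\ H(q).
\end{equation*}
Finally, since every $q_\ell\le\alpha$, we have $H(q)=\sum_\ell q_\ell\log_2(1/q_\ell)\ge\sum_\ell q_\ell\log_2(1/\alpha)=\log_2(1/\alpha)$. This proves the bound for each $\lambda$, and hence for $\Cb_\mu^{\leftrightarrow}(G)$.

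I expect the main obstacle to be making the Kraft inequality rigorous for non-binary alphabets under the cost convention $\log_2 m$ rather than integer bit lengths; this is the role of the entropy bound of Dietzfelbinger and Wunderlich. The induction above handles it. A secondary point is justifying that every leaf set is a genuine rectangle, even in fully interactive protocols. This follows by induction along the transcript: each message from Alice refines only $S$, and each message from Bob refines only $T$.
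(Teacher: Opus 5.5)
Your proposal is correct and follows essentially the same route as the paper. For each fixed $\lambda$, the leaves partition $X\times Y$ into winning rectangles of $\mu$-mass at most $\alpha$; a Kraft-type bound on $\sum_\ell 2^{-L(\ell)}$ combined with Gibbs' inequality then gives $\E_\mu[L\mid\Lambda=\lambda]\geq H(q)\geq\log_2(1/\alpha)$, and averaging over $\lambda$ finishes the proof. The one difference is minor: you prove the inequality $\sum_\ell 2^{-L(\ell)}\leq 1$ by induction, whereas the paper uses the reduced-tree normal form to obtain equality, so your version does not need that normal form.
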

The proof is given in Appendix~\ref{app:leaf}, which adapts the standard rectangle-size and entropy argument for average communication~\cite{DietzfelbingerWunderlich2007}.

\subsection{Bounds from winning probabilities}
\label{sec:winning-probability-main}

The winning-rectangle argument applies most directly to correlations that win a game with certainty. To handle more general correlations, we now derive a second bound that depends only on the overall winning probability. Let $\omega_{\mathrm c}:=\omega_{\mathrm c}(G,\mu)\in(0,1)$ denote the largest winning probability achievable without communication.

Our bound is expressed in terms of the binary relative entropy
\begin{equation}
 \Dbin(p\|q)
 :=
 p\log_2\frac pq
 +(1-p)\log_2\frac{1-p}{1-q},
 \label{eq:dbin-definition-app}
\end{equation}
defined for $p\in[0,1]$ and $q\in(0,1)$. For a classical winning
probability $\omega_{\mathrm c}\in(0,1)$, define
\begin{equation}
 \Phi_{\omega_{\mathrm c}}(w)
 :=
 \begin{cases}
 0, & 0\leq w\leq \omega_{\mathrm c},\\[1mm]
 \Dbin(w\|\omega_{\mathrm c}), & \omega_{\mathrm c}<w\leq1.
 \end{cases}
 \label{eq:per-copy-phi-main}
\end{equation}
The communication cost can then be bounded directly from the
winning probability, as shown in the following Lemma.

\begin{lemma}[Communication bound from winning probability]
\label{lem:single-game-cost-main}
Let $G$ have input set $X\times Y$ and a full-support input
distribution $\mu$. Suppose that its classical winning
probability satisfies $\omega_{\mathrm c}\in(0,1)$. Then every
correlation $\vec{P}$ in this Bell scenario satisfies
\begin{equation}
 \Cb_\mu^{\leftrightarrow}(\vec{P})
 \geq
 \Phi_{\omega_{\mathrm c}}\bigl(w_G(\vec{P})\bigr),
 \label{eq:single-game-cost-main}
\end{equation}
where $w_G(\vec{P})$ denotes the winning probability of
$\vec{P}$ in $G$.
\end{lemma}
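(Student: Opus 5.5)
We fix an arbitrary two-way protocol $\mathcal R$ that reproduces $\vec P$ exactly, and set $w:=w_G(\vec P)$. The case $w\le\omega_{\mathrm c}$ is trivial, so we assume $w>\omega_{\mathrm c}$. We will show that $\E_{\mu,p_\Lambda}[L_{\mathcal R}]\geq \Dbin(w\|\omega_{\mathrm c})$, comparing two distributions on leaves. For fixed $\lambda$, the protocol tree partitions the inputs into rectangles $R_\ell=S_\ell\times T_\ell$, one per leaf $\ell$. On each such rectangle the outputs are local functions, exactly as in the argument preceding \cref{lem:single-rectangle-cost-main}. Let $\pi_\lambda(\ell):=\mu(R_\ell)$ be the induced leaf distribution. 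Let $w_\ell$ be the conditional winning probability of the protocol given that leaf $\ell$ is reached, i.e., averaged over $\mu$ restricted to $R_\ell$.

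\textbf{Step 1 (per-leaf classical bound).} For each leaf, let $q_\ell:=\max$ over local deterministic strategies $(a,b)$ on $X\times Y$ of $\sum_{(x,y)\in R_\ell}\mu(x,y)V(a(x),b(y),x,y)$. The leaf's local response functions, defined on $S_\ell$ and $T_\ell$, can be extended arbitrarily outside the rectangle to give a global local strategy. That strategy wins with probability at most $\omega_{\mathrm c}$. Hence the winning mass on $R_\ell$ is at most $\omega_{\mathrm c}$, which gives $\pi_\lambda(\ell)w_\ell\le\omega_{\mathrm c}$. This bound alone is too weak. The key refinement is a reweighting: the leaves with their response functions form a single communication protocol. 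We then choose a distribution $Q_\lambda(\ell):=2^{-L(\ell)}$. By the Kraft-type equality for the public tree with costs $\log_2 m(v)$ and no unary nodes, $\sum_\ell 2^{-L(\ell)}=1$. Moreover, we can simulate "guessing the transcript with public randomness" locally. Shared randomness picks a leaf $\ell$ with probability $2^{-L(\ell)}$, and each party plays the response function of that leaf; on the event that both parties' inputs are consistent with $\ell$, this matches the protocol. This yields a zero-communication strategy that wins with probability at least $\sum_\ell 2^{-L(\ell)}\,\mu(R_\ell)w_\ell/\mu(R_\ell)$ restricted appropriately. The plan is to arrange this, averaged also over $\lambda$, into the statement
\begin{equation}
 \E_{\lambda}\sum_\ell Q_\lambda(\ell)\,w_\ell\;\le\;\omega_{\mathrm c}.
\end{equation}
Achieving this needs care. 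Outside $R_\ell$ the guessed-leaf strategy can still win by chance, which only helps the bound, but the guessed leaf must be sampled independently of the inputs. Sampling the leaf from $Q_\lambda$ and outputting via that leaf's functions gives a local strategy with winning probability $\sum_\ell Q_\lambda(\ell)\Pr_\mu[\text{win with leaf-}\ell\text{ functions}]$. This is at least $\sum_\ell Q_\lambda(\ell)\mu(R_\ell)w_\ell\cdot(\ldots)$, so we need a conditional version. Instead, we will use conditional input distributions: for each leaf, the leaf functions applied to $\mu|_{R_\ell}$ win with probability $w_\ell$.

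\textbf{Step 2 (data processing).} With $P(\ell,\lambda)=p_\Lambda(\lambda)\pi_\lambda(\ell)$ and $Q(\ell,\lambda)=p_\Lambda(\lambda)2^{-L(\ell)}$, we have
\begin{equation}
 \E[L_{\mathcal R}]\;\ge\;\E[L_{\mathcal R}]-H(\text{leaf}|\Lambda)\cdot 0 \;\ge\; D(P\|Q)
\end{equation}
using $D(P\|Q)=\sum P\log_2 P+\E_P[L]\le \E_P[L]$. We then apply data processing to the binary "win" indicator. Under $P$ the win probability is $w$. Under the tilted distribution $Q$, we would like the win probability to be $\le\omega_{\mathrm c}$, which gives $\E[L]\ge \Dbin(w\|\omega_{\mathrm c})$ by monotonicity of $\Dbin(\cdot\|\cdot)$ for $w>\omega_{\mathrm c}$. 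To make "win under $Q$" meaningful, we couple inputs to the leaf. Given $(\lambda,\ell)$, draw $(x,y)\sim\mu|_{R_\ell}$ under both $P$ and $Q$. The $P$-marginal on inputs is then exactly $\mu$. For the $Q$ side, we must show that the win probability $\E_Q[w_\ell]$ is at most $\omega_{\mathrm c}$.

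\textbf{Main obstacle.} Bounding $\E_Q[w_\ell]\le\omega_{\mathrm c}$ is the main difficulty, since $\mu|_{R_\ell}$ need not aggregate to $\mu$ under $Q$. My first attempt will be the standard trick from the information-complexity / zero-communication literature. We replace the conditioning by a direct comparison of $Q$ with the protocol in which Alice and Bob publicly sample each message uniformly (probability $1/m(v)$) and abort locally if the guess disagrees with what their input would send. Joined with inputs drawn from $\mu$, this is a local strategy. The joint law of $(x,y,\ell)$ on the event "no abort" has density $\mu(x,y)2^{-L(\ell)}\mathbf 1[(x,y)\in R_\ell]$. Therefore I will take $Q(x,y,\lambda,\ell):=p_\Lambda(\lambda)\mu(x,y)2^{-L(\ell)}$ over all pairs, including inconsistent ones. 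On inconsistent pairs the parties still output something (a default local rule), so the full $Q$ is the law of a genuine zero-communication strategy and its win probability is $\le\omega_{\mathrm c}$. Meanwhile $P(x,y,\lambda,\ell)=p_\Lambda\mu\,\mathbf 1[(x,y)\in R_\ell]$, and
\begin{equation}
 D(P\|Q)=\E_P\bigl[\log_2 \bigl(\mathbf 1/2^{-L}\bigr)\bigr]=\E[L_{\mathcal R}].
\end{equation}
This equality holds exactly, which removes the need for the entropy inequality. The winning indicator is a function of $(x,y,\lambda,\ell)$ under both laws, since outputs are determined by the leaf and inputs. Data processing then gives $\E[L_{\mathcal R}]\ge\Dbin(w\|w_Q)\ge\Dbin(w\|\omega_{\mathrm c})$, and minimizing over $\mathcal R$ proves \cref{eq:single-game-cost-main}. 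The remaining step to verify is that the default rule on inconsistent leaves is truly local, meaning each party's output depends only on their own input, $\lambda$, and $\ell$.
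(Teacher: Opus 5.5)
Your final argument, the construction in the ``Main obstacle'' paragraph, is correct, but it follows a different route from the paper's. The paper chains $\E[L]\ge H(T|\Lambda)=I(Z;T|\Lambda)=\E_{p_{T,\Lambda}}\bigl[D(P_{Z|T,\Lambda}\|\mu)\bigr]$. It then applies data processing separately for each transcript, using that transcript's extended response functions as a local strategy: this strategy wins with probability $w_{t,\lambda}$ under the posterior and with probability at most $\omega_{\mathrm c}$ under $\mu$. Finally it needs convexity of $\Phi_{\omega_{\mathrm c}}$ and Jensen to reach $\Phi_{\omega_{\mathrm c}}(w)$.

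You instead compare the true joint law of $(x,y,\lambda,\ell)$ with the law of a single transcript-guessing zero-communication strategy. This is the same guessing device used in Appendix~\ref{app:rao-jk}. For this reference law, $D(P\|Q)=\E[L_{\mathcal R}]$ holds with equality, so one data-processing step on the win bit, plus monotonicity of $\Dbin$, finishes the proof. You need no entropy (Gibbs) inequality, no posterior decomposition and no Jensen step.

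The two proofs track the same quantities, since by the chain rule $D(P\|Q)=\bigl(\E[L]-H(T|\Lambda)\bigr)+I(Z;T|\Lambda)$; they differ only in where data processing is applied. The paper's route has the advantage of isolating the information bound $I(Z;T|\Lambda)\ge\Phi_{\omega_{\mathrm c}}(w)$. That bound is what Lemma~\ref{lem:per-copy-cost-main} needs, via $I(Z^n;T|\Lambda)\ge\sum_i I(Z_i;T|\Lambda)$. Your method recovers it too if the guessed leaf is sampled from $\Prob[T=\ell|\Lambda=\lambda]$ instead of $2^{-L(\ell)}$, because then $D(P\|Q)=H(T|\Lambda)=I(Z;T|\Lambda)$.

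For the write-up, discard Step~1 and the first part of Step~2. Their target, $\E_\lambda\sum_\ell Q_\lambda(\ell)w_\ell\le\omega_{\mathrm c}$, is false in general: it equals $1$ for any perfectly winning protocol, which is why that line had to be abandoned. The ``remaining step'' you flag is immediate. Extend each leaf's response functions arbitrarily from $S_\ell$ to $X$ and from $T_\ell$ to $Y$, as the paper does. The win indicator is then one fixed function $f(x,y,\lambda,\ell)$ that agrees with the true win indicator on $\supp P$ and defines a shared-randomness local strategy under $Q$. Normalization of $Q$ uses the Kraft equality, which holds for the reduced trees assumed in the paper, or if unreached leaves are kept in $Q$.
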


A winning probability above the classical value therefore implies a positive lower bound on the communication cost. The proof is given in Appendix~\ref{app:single-game-proof}, which adapts standard entropy and relative-entropy arguments used in previous work on information cost and two-prover games~\cite{ChakrabartiKondapallyWang2012,BravermanKo2018}.

The local response functions for a winning rectangle can be extended arbitrarily to all inputs. The resulting zero-communication strategy wins on every input pair in the rectangle and may also win on pairs outside it. Hence, $\alpha\leq\omega_{\mathrm c}$. For a perfectly winning correlation, substituting $w_G(\vec{P})=1$ in Lemma~\ref{lem:single-game-cost-main} therefore yields the lower bound $\log_2(1/\omega_{\mathrm c})$, which cannot exceed the winning-rectangle bound. The winning-probability bound can nevertheless be easier to evaluate when the classical value is known and, unlike the winning-rectangle bound, also applies to correlations that win only imperfectly.

\section{Applications to correlations arising in the parallel repetition of nonlocal games}
\label{sec:parallel-applications-main}

We now use the single-game bounds to study how communication
cost grows when several copies are played in parallel.
In $G^{\otimes n}$, the input pairs are sampled independently
from $\mu$, and each player receives their full input string
and produces one output per copy. Since a joint protocol may use information from all copies together, a communication lower bound cannot generally be
obtained by simply adding the single-game bounds.

\subsection{Winning rectangles in parallel games}
\label{sec:parallel-rectangles-main}

We first consider the winning-rectangle bound under parallel repetition, assuming that every input pair in $X\times Y$ is allowed, and that $\mu$ has full support. Within each copy, Alice's and Bob's inputs may be correlated, while different copies are sampled independently. Let
\begin{equation}
 \alpha_n
 :=
 \max_{S\times T\ {\rm winning}}
 \mu^{\otimes n}(S\times T).
 \label{eq:parallel-alpha-main}
\end{equation}
Here, a winning rectangle $S\times T\subseteq X^n\times Y^n$
admits local response functions, each depending on the
corresponding party's full input string, that win every copy
on every input pair in the rectangle.

The following lemma shows that the largest winning-rectangle
probability is multiplicative across independent blocks
of copies.

\begin{lemma}[Multiplicativity of winning-rectangle probabilities]
\label{lem:tensor-main}
Let $G$ have input set $X\times Y$,
$\mu$ be a full-support distribution on this set, and the input pairs of different copies be independently drawn from $\mu$.
For all positive integers $k$ and $\ell$,
\begin{equation}
 \alpha_{k+\ell}=\alpha_k\alpha_\ell.
\end{equation}
Consequently, for every positive integer $n$,
\begin{equation}
 \alpha_n=\alpha_1^n.
 \label{eq:tensor-main}
\end{equation}
\end{lemma}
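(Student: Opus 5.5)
We prove the two inequalities $\alpha_{k+\ell}\geq\alpha_k\alpha_\ell$ and $\alpha_{k+\ell}\leq\alpha_k\alpha_\ell$ separately and then obtain $\alpha_n=\alpha_1^n$ by induction on $n$. Throughout, we identify $X^{k+\ell}=X^k\times X^\ell$ and $Y^{k+\ell}=Y^k\times Y^\ell$, and write inputs as $x=(x',x'')$ and $y=(y',y'')$. The product structure gives $\mu^{\otimes(k+\ell)}=\mu^{\otimes k}\otimes\mu^{\otimes\ell}$.

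\emph{Lower bound.} Take optimal winning rectangles $S_1\times T_1\subseteq X^k\times Y^k$ and $S_2\times T_2\subseteq X^\ell\times Y^\ell$, with response functions $(a_1,b_1)$ and $(a_2,b_2)$. Then $(S_1\times S_2)\times(T_1\times T_2)$ is a rectangle in $X^{k+\ell}\times Y^{k+\ell}$. The product responses $a(x',x'')=(a_1(x'),a_2(x''))$ and $b(y',y'')=(b_1(y'),b_2(y''))$ win every copy on every pair in it. Its probability is $\mu^{\otimes k}(S_1\times T_1)\,\mu^{\otimes\ell}(S_2\times T_2)=\alpha_k\alpha_\ell$.

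\emph{Upper bound.} This is the main step, because responses may depend on full strings and couple the blocks. Let $S\times T$ be winning for $G^{\otimes(k+\ell)}$ with responses $(a,b)$. Write $\mu^{\otimes(k+\ell)}(S\times T)$ as a sum over $(x',y')\in X^k\times Y^k$ of $\mu^{\otimes k}(x',y')$ times $\mu^{\otimes\ell}(S_{x'}\times T_{y'})$. Here $S_{x'}=\{x'':(x',x'')\in S\}$ and $T_{y'}=\{y'':(y',y'')\in T\}$.

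For fixed $(x',y')$, the slice $S_{x'}\times T_{y'}$ is a rectangle in the last $\ell$ copies. It is winning for $G^{\otimes\ell}$ with responses $x''\mapsto a(x',x'')|_{\text{last }\ell}$ and $y''\mapsto b(y',y'')|_{\text{last }\ell}$. These are legitimate local functions because $x'$ and $y'$ are fixed parameters. Hence each slice has probability at most $\alpha_\ell$ (or it is empty). So
\begin{equation}
 \mu^{\otimes(k+\ell)}(S\times T)\leq\alpha_\ell\,\mu^{\otimes k}\bigl(\{(x',y'):S_{x'}\neq\emptyset,\ T_{y'}\neq\emptyset\}\bigr).
\end{equation}

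The set $\{(x',y'):S_{x'}\neq\emptyset,\ T_{y'}\neq\emptyset\}$ equals $\pi(S)\times\pi(T)$, where $\pi$ projects onto the first $k$ coordinates, so it is a rectangle. It remains to show that it is winning for $G^{\otimes k}$. This needs care: for $x'\in\pi(S)$, the first-block output must not depend on the choice of completion $x''$. Fix a choice function $x'\mapsto\hat x''(x')\in S_{x'}$ and set $\hat a(x')=a(x',\hat x''(x'))|_{\text{first }k}$. Do the same for Bob with $\hat y''(y')\in T_{y'}$, defining $\hat b(y')$.

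Now take any $(x',y')\in\pi(S)\times\pi(T)$. The pair $\bigl((x',\hat x''(x')),(y',\hat y''(y'))\bigr)$ lies in $S\times T$, so the outputs $a$ and $b$ win all $k+\ell$ copies there. In particular they win the first $k$ copies on inputs $(x',y')$. Those outputs are exactly $\hat a(x')$ and $\hat b(y')$. Hence $\pi(S)\times\pi(T)$ is winning with probability at most $\alpha_k$, which gives $\alpha_{k+\ell}\leq\alpha_k\alpha_\ell$.

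The key point is this choice-of-completion trick. Because the full pair lies in the rectangle, the winning condition transfers to the projection, and no consistency between different completions is needed.

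Combining both bounds gives $\alpha_{k+\ell}=\alpha_k\alpha_\ell$, and induction yields $\alpha_n=\alpha_1^n$.
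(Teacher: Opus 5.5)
Your proposal is correct and follows essentially the same route as the paper's proof. Both use the product of optimal rectangles for $\alpha_{k+\ell}\geq\alpha_k\alpha_\ell$. For the reverse inequality, both decompose $\mu^{\otimes(k+\ell)}(S\times T)$ into slices $S_{x'}\times T_{y'}$, each winning for the last $\ell$ copies, over the projected rectangle $\pi(S)\times\pi(T)$, which is shown winning for the first $k$ copies by fixing a completion for each prefix.
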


The proof is given in Appendix~\ref{app:tensor}.
Applying Lemma~\ref{lem:single-rectangle-cost-main} to
$G^{\otimes n}$ and using~\cref{eq:tensor-main} gives
the following linear lower bound.

\begin{theorem}[Perfect-winning bound for parallel games]
\label{thm:exact-product-main}
Let $G$ have input set $X\times Y$,
$\mu$ be a full-support distribution on this set, and the input pairs of different copies be independently drawn from $\mu$. For every positive integer $n$,
\begin{equation}
\Cb_{\mu^{\otimes n}}^{\leftrightarrow}(G^{\otimes n})
\geq
n\log_2\frac1{\alpha_1},
\label{eq:product-game-main}
\end{equation}
where $\alpha_1$ is the largest probability of a winning rectangle for the single-copy game under $\mu$.
\end{theorem}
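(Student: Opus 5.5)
The plan is to treat $G^{\otimes n}$ as a single nonlocal game and apply Lemma~\ref{lem:single-rectangle-cost-main} to it directly. The parallel game has input set $X^n\times Y^n$ and input distribution $\mu^{\otimes n}$. Its output sets are $A^n$ and $B^n$, and its winning predicate is the product $\prod_{i=1}^n V(a_i,b_i,x_i,y_i)$. Because $\mu$ has full support on $X\times Y$, so does $\mu^{\otimes n}$ on $X^n\times Y^n$. This is the hypothesis Lemma~\ref{lem:single-rectangle-cost-main} requires. A perfectly winning protocol for $G^{\otimes n}$ is exactly a protocol that wins $G^{\otimes n}$ with probability one. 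Hence the quantity $\Cb_{\mu^{\otimes n}}^{\leftrightarrow}(G^{\otimes n})$ is precisely the one that lemma bounds.

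Next I will check that the largest winning-rectangle probability of $G^{\otimes n}$, in the sense of \cref{eq:alpha-main}, coincides with $\alpha_n$ from \cref{eq:parallel-alpha-main}. A winning rectangle of the game $G^{\otimes n}$ is a set $S\times T\subseteq X^n\times Y^n$ together with local functions $a:S\to A^n$ and $b:T\to B^n$ under which the product predicate equals one on all of $S\times T$. That is, every copy is won at every input pair in the rectangle. This is verbatim the notion used to define $\alpha_n$, so the two maxima agree. The lemma then gives $\Cb_{\mu^{\otimes n}}^{\leftrightarrow}(G^{\otimes n})\ge\log_2(1/\alpha_n)$.

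Finally, I will invoke Lemma~\ref{lem:tensor-main}, which gives $\alpha_n=\alpha_1^n$. Substituting yields $\log_2(1/\alpha_n)=n\log_2(1/\alpha_1)$, which is \cref{eq:product-game-main}. Here $\alpha_1>0$, since any single input pair forms a winning rectangle, so the logarithm is well defined.

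The only point needing care is the identification in the second step. In particular, one must confirm that in \cref{eq:alpha-main} the maximization over "rectangles found in any rooted public tree" is no more restrictive than maximizing over all winning rectangles. This holds because any rectangle $S\times T$ arises as a leaf of some protocol: Alice announces whether $x\in S$, and Bob announces whether $y\in T$. Alternatively, it suffices to note that the proof of Lemma~\ref{lem:single-rectangle-cost-main} only uses that every leaf is a winning rectangle, and each such rectangle has probability at most $\alpha_n$. All the substantive work is contained in the multiplicativity Lemma~\ref{lem:tensor-main}, which is assumed here, so I expect no real obstacle beyond this bookkeeping.
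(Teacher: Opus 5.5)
Your proposal is correct and is the paper's own argument: the paper obtains the theorem by applying Lemma~\ref{lem:single-rectangle-cost-main} to $G^{\otimes n}$ under the full-support distribution $\mu^{\otimes n}$, then substituting $\alpha_n=\alpha_1^n$ from Lemma~\ref{lem:tensor-main}. Your extra step identifying the winning rectangles of $G^{\otimes n}$ with those defining $\alpha_n$ is sound, with one harmless nit: deriving $\alpha_1>0$ from singleton rectangles presumes every input pair admits some winning output pair, and if one does not, no perfectly winning protocol exists, so the bound holds vacuously.
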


\subsection{Winning probabilities in parallel games}
\label{sec:robust-product-main}

We next extend the winning-probability bound to parallel games.
Unlike the winning-rectangle argument, this requires Alice's and
Bob's inputs within each copy to be independent, i.e.,
$\mu=\mu_A\times\mu_B$. The resulting bound controls the total
communication cost in terms of the winning probability of each
copy, even when the protocol processes all copies jointly.

\begin{lemma}[Communication bound for parallel games]
\label{lem:per-copy-cost-main}
Let $G$ have input set $X\times Y$ and full-support product
distribution $\mu=\mu_A\times\mu_B$, with classical winning
probability $\omega_{\mathrm c}\in(0,1)$. For any positive
integer $n$, consider a fully interactive classical protocol
$\mathcal R$ for $G^{\otimes n}$ with inputs drawn from
$\mu^{\otimes n}$. Let $w_i$ denote the probability that
$\mathcal R$ wins copy $i$, averaged over the inputs and $\Lambda$. Then
\begin{equation}
 \E_{\mu^{\otimes n},p_\Lambda}[L_{\mathcal R}]
 \geq
 \sum_{i=1}^n
 \Phi_{\omega_{\mathrm c}}(w_i),
 \label{eq:per-copy-cost-main}
\end{equation}
where $\Phi_{\omega_{\mathrm c}}$ is defined in
\cref{eq:per-copy-phi-main}.
\end{lemma}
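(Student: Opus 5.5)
We will prove the bound by an information-theoretic argument: first bound the average message length by the information the transcript reveals about the inputs, then split this information copy by copy using the product structure of $\mu^{\otimes n}$, and finally lower bound each copy's share via a zero-communication simulation. Fix a protocol $\mathcal R$, condition on $\Lambda=\lambda$, and let $M$ denote the transcript (the leaf reached). Since the leaves of the public tree form a prefix-free code with costs $L(\ell)$ satisfying the Kraft equality $\sum_\ell 2^{-L(\ell)}=1$ (after pruning unary nodes), the standard entropy bound for variable-length trees gives $\E[L\mid\lambda]\geq H(M\mid\lambda)$. This is the same entropy step as in the proof of Lemma~\ref{lem:single-rectangle-cost-main}. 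We will use the weaker quantity: the internal information $I(X^n;M\mid Y^n,\lambda)+I(Y^n;M\mid X^n,\lambda)\leq H(M\mid\lambda)$. This inequality holds because, for a deterministic protocol with a product input distribution, a rectangle argument gives $I(X^n;M|Y^n)+I(Y^n;M|X^n)\le I(X^nY^n;M)$. Averaging over $\lambda$ gives $\E[L_{\mathcal R}]\geq \mathrm{IC}:=I(X^n;M\mid Y^n\Lambda)+I(Y^n;M\mid X^n\Lambda)$.

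The next step is to apply the chain rule over copies. Independence of copies, together with independence of $X_i$ and $Y_i$ within each copy, gives
\begin{equation}
\mathrm{IC}\geq\sum_{i=1}^n\Bigl[I(X_i;M\mid X_{<i}Y^n\Lambda)+I(Y_i;M\mid X^nY_{>i}\Lambda)\Bigr].
\end{equation}
We will instead use the symmetric embedding with auxiliary variable $R_i=(X_{<i},Y_{>i},\Lambda)$. Via the standard rectangle property of transcripts under product distributions, this yields $\mathrm{IC}\ge\sum_i \bigl[I(X_i;M\mid Y_i R_i)+I(Y_i;M\mid X_i R_i)\bigr]$. This is the Braverman-type direct-sum decomposition.

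It then remains to show, for each $i$, that
\begin{equation}
I(X_iY_i;M R_i)\ \text{(or the internal version)}\ \geq \Dbin(w_i\|\omega_{\mathrm c})
\end{equation}
whenever $w_i>\omega_{\mathrm c}$. For this we consider the joint distribution $P$ of $(X_i,Y_i,M,R_i)$ and compare it with $Q=\mu(x_i,y_i)\,P(m,r_i)$, in which the transcript and public variables are independent of the $i$-th inputs. The difficulty is that under $Q$ the parties must still be able to act locally. Because the rectangle property makes $P(m\mid x_i,y_i,r_i)$ factor as $f(x_i,m,r_i)g(y_i,m,r_i)$, we will use the relative-entropy argument of~\cite{BravermanKo2018}: sampling $(m,r_i)$ jointly with the local factors gives a zero-communication strategy for the single copy. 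Its success probability is at most $\omega_{\mathrm c}$, while the data-processing inequality applied to the indicator of winning copy $i$ gives $D(P\|Q')\geq\Dbin(w_i\|\omega_{\mathrm c})$. We then identify $D(P\|Q')$ with the $i$-th internal-information term.

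The main obstacle is this last identification. We must show that the conditional information terms equal or dominate a divergence to a genuinely local (zero-communication) distribution, and the two one-sided terms must combine correctly. We will first try the route of~\cite{BravermanKo2018}: define $Q'(x,y,m,r)=\mu_A(x)\mu_B(y)P(m,r\mid x)P(m,r\mid y)/P(m,r)$, suitably normalized, and bound $D(P\|Q')$ by the sum of the two conditional mutual informations. Summing over $i$ then completes the bound, and the case $w_i\le\omega_{\mathrm c}$ is trivial since $\Phi_{\omega_{\mathrm c}}(w_i)=0$ there.
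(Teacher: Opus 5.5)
Your reduction to per-copy terms is sound: for product priors, internal information equals external information, and the $R_i$-embedding inequality holds. There is, however, a genuine gap at the step that carries all of the game-specific content. The per-copy inequality bounding $I(X_i;M|Y_iR_i)+I(Y_i;M|X_iR_i)$ from below by $\Phi_{\omega_{\mathrm c}}(w_i)$ is left as ``the main obstacle'' and never proved, and the route you propose for it cannot work. You correctly note that, for a product prior and rectangle transcripts, $P(m|x_i,y_i,r_i)=f(x_i,m,r_i)\,g(y_i,m,r_i)$. Substituting this into your $Q'$ gives $P(m,r|x)P(m,r|y)/P(m,r)=P(m,r|x,y)$, so $Q'(x,y,m,r)=\mu_A(x)\mu_B(y)P(m,r|x,y)$ is exactly $P$. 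Hence $D(P\|Q')=0$, and $Q'$ is not the law of any zero-communication strategy: it wins copy $i$ with probability $w_i$, not at most $\omega_{\mathrm c}$.

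The missing idea is \emph{local completion}. With it, the first reference distribution you wrote down, $Q=\mu(x_i,y_i)\,P(m,r_i)$, already works, so your worry that the parties cannot act locally under $Q$ is unfounded. For fixed $(m,r_i)$ the leaf is a rectangle and the prior is product. So, conditioned on $(x_i,y_i,m,r_i)$, the unknown coordinates $X_{>i}$ and $Y_{<i}$ are independent, the first depending only on $(x_i,m,r_i)$ and the second only on $(y_i,m,r_i)$. Consider the channel in which Alice resamples $X_{>i}$ from $P(\cdot|x_i,m,r_i)$ and Bob resamples $Y_{<i}$ from $P(\cdot|y_i,m,r_i)$ (each chosen arbitrarily when undefined), after which both apply the leaf's output functions and keep copy $i$. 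Under $P$ this channel reproduces the true joint law, with winning probability $w_i$. Under $Q$ it is a shared-randomness local strategy, with winning probability at most $\omega_{\mathrm c}$. Data processing then gives $I(Z_i;M|R_i)=D(P\|Q)\geq\Dbin(w_i\|\omega_{\mathrm c})$ whenever $w_i>\omega_{\mathrm c}$. The same factorization gives $I(X_i;Y_i|MR_i)=0$, which is exactly what makes your two one-sided terms sum to $I(Z_i;M|R_i)$.

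Note also that the internal-information and $R_i$-embedding machinery is not needed. The paper uses $\E[L]\geq I(Z^n;T|\Lambda)\geq\sum_i I(Z_i;T|\Lambda)$, which is superadditivity over the independent $Z_i$. Then, for each transcript $t$ and each $\lambda$, it lets Alice resample all of $X_{-i}$ and Bob all of $Y_{-i}$, obtaining $D(\nu_{i,t,\lambda}\|\mu)\geq\Phi_{\omega_{\mathrm c}}(w_{i,t,\lambda})$. Averaging with the convexity of $\Phi_{\omega_{\mathrm c}}$ finishes the proof.
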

The bound depends on the individual winning probabilities and does not require the winning events of different copies to be independent.

Each copy with a winning probability above the classical value contributes positively to the communication lower bound. Moreover, if the protocol wins all $n$ copies simultaneously with probability at least $\omega$, then necessarily $w_i\geq \omega$ for every $i$. Lemma~\ref{lem:per-copy-cost-main} therefore yields the following bound.

\begin{theorem}[Imperfect-winning input-average product bound]
\label{thm:robust-product-main}
Let $G$ have input set $X\times Y$ and a full-support
product distribution $\mu=\mu_A\times\mu_B$, with classical
winning probability $\omega_{\mathrm c}\in(0,1)$.
Then, for every positive integer $n$ and every
$\omega_{\mathrm c}<\omega\leq1$,
\begin{equation}
 \Cb_{\mu^{\otimes n}}^{\leftrightarrow,(\omega)}
 (G^{\otimes n})
 \geq
 n\,\Dbin(\omega\|\omega_{\mathrm c}),
 \label{eq:robust-product-main}
\end{equation}
where $\omega$ denotes the required probability of winning
all $n$ copies simultaneously.
\end{theorem}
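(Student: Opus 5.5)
The plan is to derive Theorem~\ref{thm:robust-product-main} directly from Lemma~\ref{lem:per-copy-cost-main}, together with monotonicity of $\Phi_{\omega_{\mathrm c}}$ above the classical value. Fix $n$ and $\omega\in(\omega_{\mathrm c},1]$. Let $\mathcal R$ be any fully interactive protocol for $G^{\otimes n}$ that wins all $n$ copies simultaneously with probability at least $\omega$, averaged over $\mu^{\otimes n}$ and $p_\Lambda$; this is an arbitrary element of the class $\mathfrak R^{\leftrightarrow}$ in \cref{eq:game-average-cost-main}. Let $W$ be the event that every copy is won, and let $W_i$ be the event that copy $i$ is won. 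Since $W\subseteq W_i$, we have $w_i=\Prob(W_i)\geq\Prob(W)\geq\omega>\omega_{\mathrm c}$ for each $i$.

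Applying Lemma~\ref{lem:per-copy-cost-main} to $\mathcal R$ gives
\begin{equation}
 \E_{\mu^{\otimes n},p_\Lambda}[L_{\mathcal R}]\geq\sum_{i=1}^n\Phi_{\omega_{\mathrm c}}(w_i)=\sum_{i=1}^n\Dbin(w_i\|\omega_{\mathrm c}).
\end{equation}
The remaining step is to show that $p\mapsto\Dbin(p\|q)$ is nondecreasing on $[q,1]$. Its derivative in $p$ is $\log_2\frac{p(1-q)}{q(1-p)}$, which is nonnegative exactly when $p\geq q$. By continuity, monotonicity extends to the endpoint $p=1$, where $\Dbin(1\|q)=\log_2(1/q)$ is finite.

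It follows that $\Dbin(w_i\|\omega_{\mathrm c})\geq\Dbin(\omega\|\omega_{\mathrm c})$ for every $i$, and hence $\E[L_{\mathcal R}]\geq n\,\Dbin(\omega\|\omega_{\mathrm c})$. This bound holds for every admissible $\mathcal R$, so taking the minimum in \cref{eq:game-average-cost-main} yields \cref{eq:robust-product-main}.

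None of these steps is a real obstacle, because all the substantive work sits inside Lemma~\ref{lem:per-copy-cost-main}. The points needing care are minor. The hypotheses must match: the full-support product distribution $\mu=\mu_A\times\mu_B$ and $\omega_{\mathrm c}\in(0,1)$ are exactly what the lemma requires. The endpoint $w_i=1$ must be handled in the monotonicity argument. And the minimum in \cref{eq:game-average-cost-main}, if it is only an infimum, is still bounded below, since the bound holds uniformly over all protocols.
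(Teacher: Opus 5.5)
Your proposal is correct and follows the paper's proof: $W\subseteq W_i$ gives $w_i\geq\omega>\omega_{\mathrm c}$, and Lemma~\ref{lem:per-copy-cost-main} together with the monotonicity of $\Phi_{\omega_{\mathrm c}}$ then yields $n\,\Dbin(\omega\|\omega_{\mathrm c})$. The only difference is that you verify this monotonicity explicitly via the derivative $\log_2\frac{p(1-q)}{q(1-p)}$, whereas the paper simply asserts that $\Phi_{\omega_{\mathrm c}}$ is nondecreasing.
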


The proofs of Lemma~\ref{lem:per-copy-cost-main} and
Theorem~\ref{thm:robust-product-main} are given in
Appendix~\ref{app:robust-product-proof}.

We now apply these two parallel-game bounds to concrete examples, beginning with the perfect-winning case.

\subsection{Magic Square (MS) correlations}
\label{sec:magic-main}

We use the MS game as our principal example of the
perfect-winning regime. In this game, Alice receives a row label
$x\in\{0,1,2\}$ and outputs three bits $a=(a_0,a_1,a_2)$
of even parity, while Bob receives a column label
$y\in\{0,1,2\}$ and outputs three bits $b=(b_0,b_1,b_2)$
of odd parity. They win when their bits agree at the intersection,
\begin{equation}
	 a_y=b_x.
\end{equation}
For local correlations, the associated Bell expression satisfies
\begin{equation}
	 I_{\MS} = \sum_{x,y=0}^{2}P(a_y=b_x|x,y) \leq 8.
\end{equation}
A quantum strategy attains the algebraic maximum of $9$. We denote 
the {\em uniform} input distribution by $\mu_{\MS}$, and the corresponding perfect quantum correlation by $\vec{P}_{\MS}$. The two-ququart strategy realizing $\vec{P}_{\MS}$ is described in Appendix~\ref{app:magic-strategy}.

For classical strategies, perfect winning becomes possible once communication is allowed. It has been shown that a single bit suffices for one copy~\cite{Broadbent06}, while one trit from Alice to Bob suffices for two copies~\cite{Marton24}. This naturally raises the question of how the minimum input-average communication scales with the number of copies when fully interactive protocols are allowed.

To apply the winning-rectangle bound, note that any winning rectangle must omit at least one row or one column, and hence can contain at most six input pairs. This maximum is attained by taking two rows and all three columns, or all three rows and two columns. Assuming $\mu_\MS$, 
the largest single-copy winning-rectangle probability therefore, is
 $2/3$. Lemma~\ref{lem:tensor-main} then gives
\begin{equation}
 \alpha_n=\left(\frac23\right)^n.
 \label{eq:magic-alpha-main}
\end{equation}
Combining this with Theorem~\ref{thm:exact-product-main}
yields one of our main results.

\begin{theorem}[MS simulation lower bound]
\label{thm:finite-budget-main}
For every positive integer $n$, under the uniform input
distribution $\mu_{\MS}^{\otimes n}$, the exact classical
simulation cost of the $n$-copy perfect MS correlation satisfies
\begin{equation}
 \Cb_{\mu_{\MS}^{\otimes n}}^{\leftrightarrow}
 \bigl(\vec{P}_{\MS}^{\otimes n}\bigr)
 \geq
 n\log_2\frac32.
\end{equation}
\end{theorem}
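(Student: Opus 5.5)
The plan is a chain of reductions that ends in Theorem~\ref{thm:exact-product-main}. First, any exact simulator of $\vec{P}_{\MS}^{\otimes n}$ wins every copy of $G_{\MS}^{\otimes n}$ with certainty. This holds because $\vec{P}_{\MS}$ assigns zero probability to losing outputs on every input pair, so the product correlation assigns zero probability to any output string that loses some copy. By the general inequality $\Cb_\mu^\sigma(\vec{P})\geq\Cb_\mu^\sigma(G)$ from Sec.~\ref{sec:preliminaries}, applied to the parallel game with $\mu_{\MS}^{\otimes n}$, we get
\begin{equation}
 \Cb_{\mu_{\MS}^{\otimes n}}^{\leftrightarrow}\bigl(\vec{P}_{\MS}^{\otimes n}\bigr)\geq \Cb_{\mu_{\MS}^{\otimes n}}^{\leftrightarrow}\bigl(G_{\MS}^{\otimes n}\bigr).
\end{equation}
The uniform distribution $\mu_{\MS}$ has full support on $\{0,1,2\}^2$, and copies are drawn independently. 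The hypotheses of Theorem~\ref{thm:exact-product-main} therefore hold, and the right-hand side is at least $n\log_2(1/\alpha_1)$.

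The remaining and only substantive step is to show that $\alpha_1=2/3$ for the MS game under $\mu_{\MS}$. For the upper bound $\alpha_1\le 2/3$, suppose $S\times T=\{0,1,2\}\times\{0,1,2\}$ were a winning rectangle. Then the local response functions $a(x),b(y)$ would win on all nine input pairs, which contradicts the local bound $I_{\MS}\le 8$; equivalently, it contradicts the parity argument. Summing $a_y(x)$ over all $x,y$ gives even parity because every row has even parity, while summing $b_x(y)$ gives odd parity because every column has odd parity, and these two sums would have to coincide. Hence every winning rectangle misses at least one row or one column, so $|S\times T|\le 2\cdot 3=6$ and $\mu_{\MS}(S\times T)\le 6/9=2/3$.

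For achievability, I will exhibit an explicit winning rectangle on two rows and all columns, for example $S=\{0,1\}$, $T=\{0,1,2\}$. Choose Alice's rows $a(0)$ and $a(1)$ of even parity arbitrarily, say all zeros. For each column $y$, Bob needs $b(y)$ of odd parity with $b_0(y)=a_y(0)$ and $b_1(y)=a_y(1)$. He can always satisfy this by choosing the free entry $b_2(y)$ to fix the parity. Hence $\alpha_1=2/3$. Lemma~\ref{lem:tensor-main} then gives $\alpha_n=(2/3)^n$, consistent with \cref{eq:magic-alpha-main}, and the bound $n\log_2(3/2)$ follows.

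The main obstacle is not in this chain. Everything heavy, namely the entropy argument of Lemma~\ref{lem:single-rectangle-cost-main} and the multiplicativity in Lemma~\ref{lem:tensor-main}, is assumed from earlier results. The only point requiring care is the first reduction: "exact simulation" must imply "perfect winning of all copies on every input pair," so that the leaf-rectangle argument applies to $G^{\otimes n}$ with a full-support distribution. This follows directly from the support of $\vec{P}_{\MS}^{\otimes n}$.
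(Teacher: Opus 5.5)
Your proposal is correct and follows the paper's argument. Exact simulation forces perfect winning, so $\Cb_{\mu_{\MS}^{\otimes n}}^{\leftrightarrow}(\vec{P}_{\MS}^{\otimes n})\geq\Cb_{\mu_{\MS}^{\otimes n}}^{\leftrightarrow}(\MS^{\otimes n})$, and Theorem~\ref{thm:exact-product-main} with $\alpha_1=2/3$ then gives the bound; as in the paper, $\alpha_1=2/3$ holds because every winning rectangle omits a row or a column, with two rows times all columns attaining it. Only $\alpha_1\leq 2/3$ is actually needed for the inequality, so your achievability construction just confirms the exact value used in \cref{eq:magic-alpha-main}.
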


As a direct consequence, Theorem~\ref{thm:finite-budget-main}
provides an explicit threshold for any prescribed finite
input-average communication budget $B\geq0$. In other words, choosing
\begin{equation}
 n_B^{\mathrm{MS}}
 :=
 \left\lfloor
 \frac{B}{\log_2(3/2)}
 \right\rfloor+1
 \label{eq:threshold}
\end{equation}
guarantees that the exact classical simulation of
$\vec{P}_{\MS}^{\otimes n_B^{\mathrm{MS}}}$
requires more than $B$ bits of two-way input-average communication.

To assess the tightness of the lower bound, we construct an
exact one-way simulator of $\vec{P}_{\MS}^{\otimes n}$.
For MS, shared-random output relabelings convert
any perfectly winning protocol into an exact simulator of $\vec{P}_{\MS}$
without additional communication, as shown in
Appendix~\ref{app:output-relabeling}. It therefore suffices
to construct a protocol that wins all $n$ copies with certainty.

For this purpose, we use classical strategies that win whenever one
of Alice's three rows is excluded. For $n$ copies, choose one excluded
row for each of the copies. Each such choice defines a set of Alice inputs
for which the corresponding product strategy wins against every Bob
input. We cover all $3^n$ possible input strings of Alice by sets of
this form (called a \emph{cover}), and let $T(n,3)\le2^n$ denote the minimum number of sets required.

Given such a cover, Alice sends the index of one set compatible with her
input (see Appendix~\ref{app:covering-finite} for the details), after which both parties use the corresponding product strategy.
This gives a perfectly winning one-way protocol with communication
cost $\log_2 T(n,3)$ and, by the relabeling argument above, an exact
simulation of $\vec{P}_{\MS}^{\otimes n}$ with the same cost.

This construction immediately gives the following upper bound on the
exact simulation cost.

\begin{theorem}[MS simulation upper bound]
\label{thm:magic-asymptotic-main}
For every positive integer $n$ and each communication model
$\sigma\in\{\to,\leftrightarrow\}$, under the uniform input
distribution $\mu_{\MS}^{\otimes n}$,
\begin{equation}
 \Cb_{\mu_{\MS}^{\otimes n}}^\sigma
 \bigl(\vec{P}_{\MS}^{\otimes n}\bigr)
 \leq
 \Cwc^\sigma
 \bigl(\vec{P}_{\MS}^{\otimes n}\bigr)
 \leq
 \log_2 T(n,3).
 \label{eq:magic-asymptotic-bracket-main}
\end{equation}
\end{theorem}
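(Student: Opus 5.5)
The first inequality is immediate from \cref{eq:cost-chain-main}: the input average of any protocol's cost is at most its maximum over $(x,y,\lambda)$. So the content of Theorem~\ref{thm:magic-asymptotic-main} is the second inequality. For it, I will exhibit a one-way protocol that exactly simulates $\vec{P}_{\MS}^{\otimes n}$ with worst-case cost $\log_2 T(n,3)$. Since one-way protocols are a special case of two-way ones, this bounds $\Cwc^{\leftrightarrow}$ as well as $\Cwc^{\to}$.

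\textbf{Step 1: single-copy strategy with an excluded row.} For each $r\in\{0,1,2\}$ I fix a deterministic local strategy that wins MS on every pair $(x,y)$ with $x\neq r$. Let Bob output the deterministic column vectors of the $3\times3$ table with all entries $0$, except that one entry in the bottom row is set to $1$ in each column. These columns are odd-parity, as required. Alice outputs the row of that table for $x\neq r$, and these rows have even parity. For $x=r$ she outputs any even-parity triple. The row whose parity is wrong is placed at $r$ by permuting rows. Hence the rectangle $(\{0,1,2\}\setminus\{r\})\times\{0,1,2\}$ is winning.

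\textbf{Step 2: product strategies and the cover.} For a string $r=(r_1,\dots,r_n)\in\{0,1,2\}^n$, run the Step 1 strategy for $r_i$ independently on copy $i$. This wins every copy on $S_r\times\{0,1,2\}^n$, where $S_r=\{x: x_i\neq r_i\ \forall i\}$.

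A family $\mathcal F$ of such $S_r$ covering $\{0,1,2\}^n$ exists, and $T(n,3)\le2^n$ holds. To see this, fix $r=(2,\dots,2)$ and its permutations. Simpler still, for each subset $U\subseteq[n]$ take $r_i=2$ if $i\in U$ and $r_i=1$ otherwise. Then $x$ lies in the set for $U=\{i:x_i\neq2\}$, provided $x_i\neq1$. That construction fails for coordinates with $x_i=1$, so I instead use the choice $r_i\in\{0,1\}$ per coordinate, which excludes either row $0$ or row $1$. Any $x$ is covered by choosing $r_i=1$ if $x_i=0$ and $r_i=0$ otherwise. This gives $2^n$ sets, so $T(n,3)\le 2^n$ is attained, and the minimum cover has size $T(n,3)$.

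\textbf{Step 3: protocol.} Alice deterministically picks the index of a set in a minimum cover that contains her input. She sends it as a single message over an alphabet of size $T(n,3)$, which costs $\log_2 T(n,3)$ bits under the paper's convention. Both parties then play the indexed product strategy. This wins all $n$ copies with certainty for every input pair.

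\textbf{Step 4: relabeling.} Convert this perfect-winning protocol into an exact simulator of $\vec{P}_{\MS}^{\otimes n}$ by shared-random output relabelings, which needs no extra communication (Appendix~\ref{app:output-relabeling}). I expect this to be the main obstacle. One must check that $\vec{P}_{\MS}$ is the uniform distribution over winning output pairs for each $(x,y)$, and that the relabeling group acts transitively on those pairs while preserving the winning condition. The group is generated by flipping bits $a_j$ and $b_i$ in patterns that preserve parities and the constraint $a_y=b_x$. An independent uniform relabeling per copy then makes the $n$ copies independent, each uniform on its winning set, giving exactly $\vec{P}_{\MS}^{\otimes n}$. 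Since the message is unchanged, the worst-case cost remains $\log_2T(n,3)$, which proves the second inequality.
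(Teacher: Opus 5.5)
Your proposal is correct and follows the paper's route. You obtain the first inequality from \cref{eq:cost-chain-main}, then use a one-way protocol in which Alice sends the index of a box $B_r$ from a minimum cover, each box carrying a product of excluded-row strategies, and you make it an exact simulator via the shared-random relabeling of Proposition~\ref{prop:magic-relabeling-app}. In Step 4, state that the relabeling is a single uniformly random even-parity $3\times3$ matrix $H$, drawn before the inputs, with Alice flipping her output by row $x$ and Bob by column $y$; this is what makes your ``patterns preserving $a_y=b_x$'' implementable locally.
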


To estimate this upper bound, we use the following bounds on the
covering number.

\begin{lemma}[Bounds on the MS covering number]
\label{lem:covering-bound-main}
For every positive integer $n$,
\begin{equation}
 \left\lceil
 \left(\frac32\right)^n
 \right\rceil
 \leq
 T(n,3)
 \leq
 \left\lceil
 (n\ln 3)\left(\frac32\right)^n
 \right\rceil.
 \label{eq:covering-bound-main}
\end{equation}
\end{lemma}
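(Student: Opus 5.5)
The argument has two halves: a volume bound for the lower estimate and a random-covering (first-moment) bound for the upper estimate. Throughout, identify Alice's $n$-copy input with a string $x\in\{0,1,2\}^n$. A choice of excluded rows $r=(r_1,\dots,r_n)\in\{0,1,2\}^n$ defines the set
\begin{equation}
 S_r:=\{x\in\{0,1,2\}^n : x_i\neq r_i \ \forall i\},
\end{equation}
which is a product of $n$ two-element sets and so has $|S_r|=2^n$. $T(n,3)$ is the least number of such sets whose union is $\{0,1,2\}^n$. This minimality is the only property of $T(n,3)$ I use for the lower bound. For the upper bound I only need to exhibit some cover of the stated size.

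For the lower bound, each set in a cover contains exactly $2^n$ of the $3^n$ strings. A cover of size $T$ therefore needs $T\,2^n\geq 3^n$, i.e.\ $T\geq(3/2)^n$. Since $T$ is an integer, $T\geq\lceil(3/2)^n\rceil$. No earlier result is needed here, although the same count is implicit in the winning-rectangle value $\alpha_n=(2/3)^n$ of \cref{eq:magic-alpha-main}.

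For the upper bound I use the standard probabilistic method. Draw $m$ exclusion strings $r^{(1)},\dots,r^{(m)}$ independently and uniformly from $\{0,1,2\}^n$. A fixed $x$ lies in $S_r$ exactly when $r_i\neq x_i$ for all $i$. This event has probability $(2/3)^n$, so $x$ is left uncovered by all $m$ sets with probability $(1-(2/3)^n)^m\leq e^{-m(2/3)^n}$. Summing over all $x$, the expected number of uncovered strings is at most
\begin{equation}
 3^n e^{-m(2/3)^n}.
\end{equation}
If this quantity is strictly below $1$, some draw covers everything and $T(n,3)\leq m$. The condition is $m>(n\ln 3)(3/2)^n$. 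Taking $m=\lceil (n\ln3)(3/2)^n\rceil$ gives only $m\geq$ that value, not the strict inequality.

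I expect this boundary case to be the one delicate step. When $m$ equals $(n\ln3)(3/2)^n$ exactly, the estimate gives expectation at most $1$, not less than $1$. I would close the gap by noting that $1-p<e^{-p}$ strictly for $p=(2/3)^n\in(0,1)$, so the expectation is in fact strictly below $1$ even at equality. Since the number of uncovered strings is a nonnegative integer with expectation less than $1$, it equals $0$ with positive probability, and a full cover of size $m$ exists.

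As a sanity check, $n=1$ gives $\lceil 1.5\rceil=2\leq T(1,3)\leq\lceil 1.65\rceil=2$, consistent with the fact that two excluded rows cover $\{0,1,2\}$. I would also remark, without needing it for the statement, that the trivial cover by the sets with $r\in\{0,1\}^n$ shows $T(n,3)\leq 2^n$. It follows that $\log_2 T(n,3)/n\to\log_2(3/2)$, matching Theorem~\ref{thm:finite-budget-main} asymptotically.
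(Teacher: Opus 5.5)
Your proposal is correct and follows essentially the same route as the paper: the volume count $T\,2^n\geq 3^n$ for the lower bound, and for the upper bound a random choice of $M=\lceil (n\ln3)(3/2)^n\rceil$ uniform exclusion patterns with a union (first-moment) bound. As in the paper, you close the boundary case with the strict inequality $(1-p)^M<e^{-Mp}$, and a cover with repeated patterns still has at most $M$ distinct sets, so $T(n,3)\leq M$.
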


The proof of Lemma~\ref{lem:covering-bound-main} is given in
Appendix~\ref{app:covering-rate}. In particular,
\begin{equation}
 \log_2 T(n,3)
 =
 n\log_2\frac32+O(\log n).
\end{equation}
Combining Theorem~\ref{thm:magic-asymptotic-main} and
Lemma~\ref{lem:covering-bound-main} with the lower bound of
Theorem~\ref{thm:finite-budget-main}, we see that the upper and lower bounds differ
by only $O(\log n)$ bits. Hence, for each
$\sigma\in\{\to,\leftrightarrow\}$,
\begin{equation}
 \lim_{n\to\infty}
 \frac{\Cb_{\mu_{\MS}^{\otimes n}}^\sigma
 \bigl(\vec{P}_{\MS}^{\otimes n}\bigr)}{n}
 =
 \lim_{n\to\infty}
 \frac{\Cwc^\sigma
 \bigl(\vec{P}_{\MS}^{\otimes n}\bigr)}{n}
 =
 \log_2\frac32.
 \label{eq:magic-asymptotic-main}
\end{equation}
Thus, the asymptotic exact-simulation cost is
$\log_2(3/2)$ bits per copy for both input-average and worst-case
communication. In particular, allowing interaction does not reduce
the asymptotic rate, since the same rate is already achieved by the
one-way protocol.

The winning-probability bound provides a corresponding result when the perfect-winning requirement is relaxed. For MS,
$\omega_{\mathrm c}=8/9$ for $\mu=\mu_{\MS}$. Hence, if all $n$ copies are required to
be won simultaneously with probability at least $\omega$, where
$8/9<\omega\leq1$, Theorem~\ref{thm:robust-product-main} gives
\begin{equation}
 \Cb_{\mu_{\MS}^{\otimes n}}^{\leftrightarrow,(\omega)}
 \bigl(\MS^{\otimes n}\bigr)
 \geq
 n\,\Dbin\left(\omega\middle\|\frac89\right).
 \label{eq:robust-magic-average-main}
\end{equation}

For comparison, Jain and Kundu showed that, under a worst-case
communication cap of $C$ bits, the probability of winning more than
an $(8/9+\eta)$ fraction of $n$ parallel MS games is at most
$2^{-\Omega(\eta^3n)+C}$~\cite{JainKundu2025}. For an explicit
numerical comparison, Appendix~\ref{app:rao-jk} evaluates the
finite-$n$ estimate appearing in the proof of Rao's
concentration bound~\cite{Rao2011} and combines it with the same
message-guessing argument, yielding
\begin{equation}
 C_{\mathrm{wc},\mu_{\MS}^{\otimes n}}^{\leftrightarrow,(\omega)}
 (\MS^{\otimes n})
 \geq
 \pos{
 \gamma_{\MS}^{\mathrm R}n
 +
 \log_2\omega-1
 },
\end{equation}
where
$\pos{u}:=\max\{u,0\}$ and
$\gamma_{\MS}^{\mathrm R}\approx2.54\times10^{-8}$. For perfect winning, \(\omega=1\), this explicit lower bound becomes nonzero only when
$n\gtrsim3.937\times10^7$.

In comparison, our winning-probability bound in
\cref{eq:robust-magic-average-main} gives
$n\log_2(9/8)\approx0.169925n$ bits for perfect winning,
while the winning-rectangle bound gives
$n\log_2(3/2)\approx0.584963n$ bits.
Both hold for every positive integer $n$ and are derived
for input-average communication. By
\cref{eq:cost-chain-main}, they also lower-bound the
corresponding worst-case cost.

\subsection{Optimal quantum correlations for the parallel-CHSH game}
\label{sec:chsh-main}

We next turn to the CHSH nonlocal game, where quantum theory achieves an advantage
over classical strategies but does not win with certainty. Specifically, upon receiving the inputs $x,y\in\{0,1\}$ , the players are required to return bits
$a,b\in\{0,1\}$ that satisfy $a\oplus b=xy$. With
uniformly distributed input distributions 
$\mu_{\mathrm{CHSH}}$, the classical winning probability is $3/4$, while the Tsirelson
correlation
\begin{equation}
	 P_T(a,b|x,y) = \frac{1}{4}\left[1+\frac{(-1)^{a\oplus b\oplus xy}}{\sqrt2}\right]
\end{equation}
wins with probability $w_T=\frac{2+\sqrt2}{4}$.

For parallel CHSH, independently applying the optimal single-copy quantum strategy to each copy is already optimal for winning all
copies simultaneously~\cite{Cleve:IEEECCC:2007}. Although this joint winning probability decreases with $n$,
exact simulation of $\vec{P}_T^{\otimes n}$ must still reproduce
the single-copy winning probability $w_T$ in every copy.
By~\cref{eq:per-copy-cost-main}, each such protocol requires at
least $n\Phi_{3/4}(w_T)$ bits of input-average communication. As a result,
\begin{equation}
 \Cb_{\mu_{\mathrm{CHSH}}^{\otimes n}}^{\leftrightarrow}
 \bigl(\vec{P}_T^{\otimes n}\bigr)
 \geq
 n\,\gamma_{\mathrm{CHSH}},
 \label{eq:chsh-per-copy-main}
\end{equation}
for every positive integer $n$, where
\begin{equation}
 \gamma_{\mathrm{CHSH}}
 :=
 \Dbin\left(
 \frac{2+\sqrt2}{4}
 \middle\|
 \frac34
 \right)
 \approx 0.04627385.
 \label{eq:chsh-coefficient-main}
\end{equation}

Thus, parallel CHSH correlations also exceed any prescribed finite
two-way input-average communication budget. For a budget $B\geq0$,
it suffices to choose
\begin{equation}
 n_B^{\mathrm{CHSH}}
 :=
 \left\lfloor
 \frac{B}{\gamma_{\mathrm{CHSH}}}
 \right\rfloor+1,
 \label{eq:chsh-threshold-main}
\end{equation}
in which case the exact simulation of
$\vec{P}_T^{\otimes n_B^{\mathrm{CHSH}}}$
requires more than $B$ bits of two-way input-average communication.

It is useful to compare this with the parallel-CHSH result of M\'arton \emph{et al.}~\cite{Marton24}. They show that exact simulation of $n$ parallel CHSH correlations
requires more than $0.0773n$ bits in a fixed-direction one-way
worst-case communication model. As mentioned in
Ref.~\cite{Marton24}, by symmetry of the CHSH correlation, their
argument also extends to two-way protocols where each round of the game has a fixed communication direction but
successive rounds may use opposite directions. Our bound has the
smaller coefficient
$\gamma_{\mathrm{CHSH}}\approx0.04627$, but directly applies to
variable-length, fully interactive two-way protocols under
input-average communication. Thus, the two bounds address different communication
models and are useful under different assumptions.

\section{Hierarchy of lower bounds}
\label{sec:hierarchy-main}
The preceding bounds use the performance of a correlation in a
chosen nonlocal game. We now turn to a complementary approach that
works directly with the full correlation, following the
convex-decomposition formulation of Ref.~\cite{Pironio:PRA:2003}.
In this formulation, the exact simulation cost is obtained by
decomposing the target correlation into deterministic correlations
and weighting each by its minimum communication cost.

A direct evaluation therefore requires accounting for all compatible
deterministic correlations, whose number grows rapidly with the size
of the Bell scenario~\cite{ZC:Entropy:2019}. To reduce this
computational burden, we construct a linear programming (LP) hierarchy that retains
only deterministic correlations up to a chosen communication cost
and bounds the contribution of the remainder.

To this end, consider a fixed full-support input distribution $\mu$ and a
communication model $\sigma\in\{\to,\leftrightarrow\}$.
Let $\Dset^\sigma$ denote the finite set of deterministic
correlations compatible with this communication model.
For each $\vec{D}\in\Dset^\sigma$, define its minimum
input-average communication cost by minimizing over deterministic
protocols $\Pi^\sigma$ that produce $\vec{D}$:
\begin{equation}
 c_\mu^\sigma(\vec{D})
 :=
 \min_{\Pi^\sigma\to\vec{D}}
 \E_\mu\bigl[L_{\Pi^\sigma}(x,y)\bigr].
 \label{eq:det-cost-main}
\end{equation}

With this notation, the exact input-average cost~\cite{Pironio:PRA:2003} of a target
correlation $\vec{P}$ is obtained by minimizing over all convex
decompositions of $\vec{P}$ into elements of $\Dset^\sigma$,
\begin{equation}
\begin{aligned}
 \Cb_\mu^\sigma(\vec{P}) = \min_{\{q_{\vec{D}}\geq0\}}
 &\quad \sum_{\vec{D}\in\Dset^\sigma} q_{\vec{D}}c_\mu^\sigma(\vec{D}) \\
 \text{such that} &\quad \vec{P}
 = \sum_{\vec{D}\in\Dset^\sigma}q_{\vec{D}}\vec{D}, \\
 &\quad
 \sum_{\vec{D}\in\Dset^\sigma}q_{\vec{D}}=1.
\end{aligned}
\label{eq:exact-hierarchy-main}
\end{equation}

An important feature of this formulation is that deterministic correlations with larger individual communication costs may still
appear with small weight in an optimal decomposition. Consequently, a target correlation can have an input-average communication cost below
a given threshold, say, $C$, even when it cannot be decomposed using only deterministic correlations whose individual costs lie below $C$. Appendix~\ref{app:PR} gives an explicit example of this
phenomenon.

Following Ref.~\cite{Pironio:PRA:2003}, we group deterministic
correlations by their minimum input-average communication costs.
Let's further denote by $|S|$ the cardinality of a set $S$.
Then, as shown in Appendix~\ref{app:finite-simulation-cost}, each cost
is at most $\log_2|X|$ for $\sigma=\to$ and at most
$\log_2|X|+\log_2|Y|$ for $\sigma=\leftrightarrow$.
Since a finite Bell scenario has only finitely many deterministic
correlations, we can order their distinct costs as
\begin{equation}
 0=\kappa_0<\kappa_1<\cdots<\kappa_K.
\end{equation}
We then define
\begin{equation}
 \Dset_k^\sigma
 :=
 \{\vec{D}\in\Dset^\sigma:
 c_\mu^\sigma(\vec{D})=\kappa_k\}.
\end{equation}
The zero-cost class $\Dset_0^\sigma$ is precisely the set
of local deterministic correlations. Moreover, the above observation implies that it suffices to consider, in the computation of~\cref{eq:exact-hierarchy-main}, $\kappa_k=\log_2|X|$ ($\kappa_k=\log_2|X|+\log_2|Y|)$ for $\sigma=\to$ ($\sigma=\leftrightarrow$).

We now define a hierarchy of relaxations indexed by
$\mathfrak{L}=0,\ldots,K-1$. At level $\mathfrak{L}$, we keep only the deterministic
correlations whose costs do not exceed $\kappa_\mathfrak{L}$. Since any omitted
deterministic correlation has cost at least $\kappa_{\mathfrak{L}+1}$, we obtain a lower bound by charging the remaining probability weight at the cost of $\kappa_{\mathfrak{L}+1}$. Relaxing the exact convex decomposition in this way
gives
\begin{equation}
\begin{aligned}
 \Cb_\mu^{\sigma,(\leq\kappa_\mathfrak{L})}(\vec{P})
 :=
 \min_{\{q_{\vec{D}}\geq0\}}
 &\quad
 \sum_{k=0}^{\mathfrak{L}}
 \sum_{\vec{D}\in\Dset_k^\sigma}
 q_{\vec{D}}\kappa_k
 \\
 &\quad+
 \kappa_{\mathfrak{L}+1}
 \left(
 1-
 \sum_{k=0}^{\mathfrak{L}}
 \sum_{\vec{D}\in\Dset_k^\sigma}q_{\vec{D}}
 \right)
 \\
 \text{such that}
 &\quad
 \vec{P}\geq
 \sum_{k=0}^{\mathfrak{L}}
 \sum_{\vec{D}\in\Dset_k^\sigma}
 q_{\vec{D}}\vec{D},
 \\
 &\quad
 \sum_{k=0}^{\mathfrak{L}}
 \sum_{\vec{D}\in\Dset_k^\sigma}q_{\vec{D}}
 \leq1.
\end{aligned}
\label{eq:relaxed-hierarchy-main}
\end{equation}

These relaxations satisfy the ordering stated in the following
theorem, whose proof is given in Appendix~\ref{app:hierarchy}.

\begin{theorem}[Input-average hierarchy]
\label{Thm:LowerBoundHierarchy}
Assume $K\geq1$. For each
$\sigma\in\{\to,\leftrightarrow\}$ and every
$\vec{P}\in\operatorname{conv}(\Dset^\sigma)$,
\begin{equation}
 \Cb_\mu^{\sigma,(\leq\kappa_0)}(\vec{P})
 \leq\cdots\leq
 \Cb_\mu^{\sigma,(\leq\kappa_{K-1})}(\vec{P})
 =
 \Cb_\mu^\sigma(\vec{P}).
 \label{eq:hierarchy}
\end{equation}
\end{theorem}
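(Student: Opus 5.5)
The plan has two parts: show that each relaxation lies below the next, and show that the top level coincides with the exact program \cref{eq:exact-hierarchy-main}. Throughout, write $Q_{\mathfrak L}:=\sum_{k\le\mathfrak L}\sum_{\vec D\in\Dset_k^\sigma}q_{\vec D}$ for the total weight retained at level $\mathfrak L$.

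\textbf{Monotonicity.} Fix $\mathfrak L\le K-2$ and take any feasible point $\{q_{\vec D}\}$ of the level-$(\mathfrak L+1)$ program. Restrict it to correlations with $k\le\mathfrak L$, i.e., set the class-$(\mathfrak L+1)$ weights to zero.
\begin{itemize}
\item Feasibility at level $\mathfrak L$: the constraints $\vec P\ge\sum q_{\vec D}\vec D$ and $\sum q_{\vec D}\le 1$ survive, because deleting nonnegative terms only decreases the sums.
\item Objective: the level-$(\mathfrak L+1)$ value is
\[
\sum_{k\le\mathfrak L}\sum_{\vec D\in\Dset_k^\sigma} q_{\vec D}\kappa_k+\kappa_{\mathfrak L+1}(Q_{\mathfrak L+1}-Q_{\mathfrak L})+\kappa_{\mathfrak L+2}(1-Q_{\mathfrak L+1}).
\]
Since $\kappa_{\mathfrak L+2}>\kappa_{\mathfrak L+1}$ and $1-Q_{\mathfrak L+1}\ge0$, this is at least
\[
\sum_{k\le\mathfrak L}\sum_{\vec D\in\Dset_k^\sigma} q_{\vec D}\kappa_k+\kappa_{\mathfrak L+1}(1-Q_{\mathfrak L}),
\]
which is exactly the level-$\mathfrak L$ objective of the restricted point.
\end{itemize}
Taking the minimum over feasible points gives $\Cb_\mu^{\sigma,(\le\kappa_{\mathfrak L})}\le\Cb_\mu^{\sigma,(\le\kappa_{\mathfrak L+1})}$. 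Attainment of the minima is standard: each program is an LP over a nonempty compact polytope. Nonemptiness holds because $q\equiv0$ is feasible, using $\vec P\ge0$.

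\textbf{Top level, lower bound $\Cb^{(\le\kappa_{K-1})}\le\Cb$.} Take an optimal exact decomposition $\{q_{\vec D}\}$ of \cref{eq:exact-hierarchy-main}. It exists since $\vec P\in\operatorname{conv}(\Dset^\sigma)$ and $\Dset^\sigma$ is finite. Restrict it to classes $k\le K-1$; as above, this is feasible for level $K-1$. The leftover weight $1-Q_{K-1}$ is exactly the weight the exact decomposition places on $\Dset_K^\sigma$, and it is charged $\kappa_K$ in both programs. So the level-$(K-1)$ objective of this point equals $\Cb_\mu^\sigma(\vec P)$.

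\textbf{Top level, upper bound $\Cb^{(\le\kappa_{K-1})}\ge\Cb$.} This is the main obstacle: a feasible relaxed point need not decompose $\vec P$ exactly. Take an optimal relaxed point $q$ with $Q:=Q_{K-1}<1$; if $Q=1$ then $\sum q_{\vec D}\vec D$ and $\vec P$ have equal normalization for every input pair, so they coincide. Otherwise the residual $\vec R:=\vec P-\sum q_{\vec D}\vec D$ is entrywise nonnegative, and for each $(x,y)$ it sums to $1-Q$ over $(a,b)$. Hence $\vec R/(1-Q)$ is a genuine conditional distribution, possibly signaling.

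I need $\vec R/(1-Q)\in\operatorname{conv}(\Dset_K^\sigma)$, or at least a decomposition of it with average cost at most $\kappa_K$. The plan is to use Appendix~\ref{app:finite-simulation-cost}: $\kappa_K$ equals $\log_2|X|$ for $\sigma=\to$ and $\log_2|X|+\log_2|Y|$ for $\sigma=\leftrightarrow$. At that cost every deterministic behaviour is realizable (full input transmission: Alice sends $x$, or the parties exchange both inputs).
\begin{itemize}
\item For $\sigma=\leftrightarrow$: every deterministic behaviour has cost at most $\kappa_K$ and lies in $\Dset^\sigma$, so the residual, being a conditional distribution, is a convex combination of deterministic behaviours.
\item For $\sigma=\to$: $\vec P\in\operatorname{conv}(\Dset^\to)$ forces no-signaling from Bob to Alice, \cref{Eq:NS:BtoA}, and every element of $\Dset^\to$ also satisfies it. So the residual also satisfies it, and one must check that every such Bob-to-Alice no-signaling behaviour is a mixture of one-way deterministic ones. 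Bob's output is then an arbitrary function of $(x,y)$ and Alice's is local, which I expect to be a standard decomposition.
\end{itemize}
Each deterministic behaviour $\vec D$ in this decomposition has cost $c^\sigma_\mu(\vec D)\le\kappa_K$. Since $c^\sigma_\mu(\vec D)$ is exactly what the exact program charges for $\vec D$, appending the weights $(1-Q)\cdot(\text{decomposition of }\vec R/(1-Q))$ to $q$ gives an exact decomposition of $\vec P$ with cost at most the relaxed objective, so $\Cb\le\Cb^{(\le\kappa_{K-1})}$. Combining both directions gives equality.
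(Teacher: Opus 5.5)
Your proof is correct and follows the paper's argument. It uses the same restriction step for monotonicity and the same truncation of an exact decomposition for the lower-bound direction; you apply the truncation only at level $K-1$ and let monotonicity carry it downward, which is equivalent to the paper's level-by-level version. It also uses the same residual-completion step at level $K-1$, and there your checks make explicit what the paper only asserts: that $\vec R/(1-Q)$ is a normalized conditional distribution, and that for $\sigma=\to$ it satisfies Bob-to-Alice no-signaling. The latter makes it a mixture of correlations $\delta_{a,A(x)}\delta_{b,B(x,y)}$, obtained by drawing $A(x)\sim P(\cdot|x)$ and then $B(x,y)\sim P(\cdot|A(x),x,y)$.

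One small correction: you only need every $\vec D\in\Dset^\sigma$ to have cost at most $\kappa_K$, which holds by definition. $\kappa_K$ itself need not equal $\log_2|X|$ or $\log_2|X|+\log_2|Y|$, because these are only upper bounds; for example, under a nonuniform $\mu$ a variable-length tree can cost less than $\log_2|X|$.
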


Thus, the hierarchy reaches the exact input-average communication
cost at its final level. At the other end of the hierarchy, the first level has a simple
interpretation. Any reduced deterministic protocol tree realizing
$\vec{D}\in\Dset^\sigma\setminus\Dset_0^\sigma$ must have a first
message with alphabet size $m\geq2$. This message is sent on every
execution and costs at least one bit, so
$c_\mu^\sigma(\vec{D})\geq1$. Since this bound is saturated by some nonlocal deterministic
correlations, $\kappa_1=1$. The first level therefore minimizes the weight remaining after
subtracting a local contribution. For a no-signaling target, this
remainder is also no-signaling, and the resulting bound equals the
EPR2 nonlocal content~\cite{EPR92} in either communication model.
At higher levels, deterministic correlations with larger minimum
input-average costs are included.

To complement these lower bounds, we obtain upper bounds by
restricting the exact decomposition in
\cref{eq:exact-hierarchy-main} to deterministic correlations
with costs at most $\kappa_k$,
\begin{equation}
\begin{aligned}
 U_{\mu}^{\sigma,(\leq\kappa_k)}(\vec{P})
 :=
 \min_{\{q_{\vec{D}}\geq0\}}
 &\quad
 \sum_{j=0}^{k}
 \sum_{\vec{D}\in\Dset_j^\sigma}q_{\vec{D}}\kappa_j
 \\
 \text{such that}
 &\quad
 \vec{P}=
 \sum_{j=0}^{k}
 \sum_{\vec{D}\in\Dset_j^\sigma}
 q_{\vec{D}}\vec{D}.
\end{aligned}
\label{eq:restricted-upper-main}
\end{equation}
In this case, every feasible decomposition gives a simulation protocol:
the parties use shared randomness to select $\vec{D}$ with
probability $q_{\vec{D}}$ and implement a deterministic protocol
with input-average cost $c_\mu^\sigma(\vec{D})$. The objective
value therefore gives an upper bound on the minimum
input-average communication cost.

If the program is infeasible at threshold $\kappa_k$, it provides
no upper bound at that threshold. Once feasible, increasing the
threshold cannot increase the upper bound. At $\kappa_K$, all
deterministic correlations are included, which recovers the exact
formulation in \cref{eq:exact-hierarchy-main}.
Combining this construction with the lower-bound hierarchy gives,
whenever the upper-bound program is feasible,
\begin{equation}
 \Cb_\mu^{\sigma,(\leq\kappa_{\mathfrak{L}})}(\vec{P})
 \leq
 \Cb_\mu^\sigma(\vec{P})
 \leq
 U_\mu^{\sigma,(\leq\kappa_k)}(\vec{P}).
 \label{eq:cost-bracket-main}
\end{equation}
If the lower bound at some level $\mathfrak L$ coincides with the upper bound at some threshold $k$, they determine the exact input-average communication cost.

As an illustration, we numerically search for quantum correlations
with large first-level lower bounds under uniform inputs $\mu$
in several Bell scenarios. We use MATLAB's \texttt{fminunc}
to optimize over bipartite quantum states and local projective
measurements with local Hilbert-space dimension $n_o$. At each
step, we evaluate $\Cb_\mu^{\sigma,(\leq0)}(\vec{P})$ by solving
the linear program in \cref{eq:relaxed-hierarchy-main} with
$\mathfrak{L}=0$. Table~\ref{Tbl.Summary} presents the largest
values found within this ansatz.

\begin{table}[t]
\centering
\begin{tabular}{c cccc}
\toprule
$n_s$ & \multicolumn{4}{c}{$n_o$} \\
\cmidrule(lr){2-5}
& $2$ & $3$ & $4$ & $5$ \\
\midrule
$2$ & $0.4142$ & $0.4574$ & $0.4882$ & $0.5125$ \\
$3$ & $0.5981$ & $0.6456$ & $1$ & $1^*$ \\
$4$ & $0.6955$ & $0.7467$ & $1^*$  & \\
$5$ & $0.7553$ & ~~$0.8100^*$ & $1^*$  & \\
\bottomrule
\end{tabular}
\caption{Numerical lower bounds found at the first level of the hierarchy
in \cref{eq:hierarchy}, for quantum correlations with
$|X|=|Y|=n_s$ inputs, $|A|=|B|=n_o$ outputs, and local
Hilbert-space dimension $n_o$. For each unstarred entry,
a matching upper bound certifies the two-way input-average
communication cost of the corresponding correlation within
numerical precision. Starred entries remain lower bounds.}
\label{Tbl.Summary}
\end{table}

To assess the tightness of these lower bounds, we solve
\cref{eq:restricted-upper-main} with
$\sigma=\leftrightarrow$ and $\kappa_k=1$, using all zero-
and one-bit deterministic correlations. For each unstarred
entry, the upper bound matches the lower bound within numerical
precision, thereby determining the two-way input-average
communication cost of the corresponding correlation to that
precision. For $(n_s,n_o)=(5,3),(4,4),(5,4)$, and $(3,5)$,
the number of deterministic correlations makes the upper-bound
program exceed our available computational resources. The corresponding values therefore remain lower bounds.

The two-setting values can be compared with the earlier
calculations of Ref.~\cite{Pironio:PRA:2003}. For $n_s=n_o=2$,
we recover the exact CHSH input-average communication cost
$\sqrt{2}-1\simeq0.4142$ obtained there. For
$(n_s,n_o)=(3,4)$, the numerical search recovers the MS
correlation, whose first-level lower bound is exactly one
because its EPR2 nonlocal content is one. A decomposition into
deterministic one-bit correlations gives a matching upper bound.

\section{Discussion}

In this work, we show how properties of quantum correlations can
constrain the communication required for their classical simulation.
The two game-based bounds use complementary properties of the game.
Theorem~\ref{thm:exact-product-main} is based on winning rectangles
for perfect strategies, whereas
Theorem~\ref{thm:robust-product-main} uses winning probabilities in
the more general imperfect-winning case.
The hierarchy, by contrast, works directly with the full distribution $\vec{P}$ 
by retaining deterministic correlations up to a chosen communication
cost and bounding the contribution of the remainder.

Among the examples considered here, MS gives the sharpest
characterization. Earlier parallel-repetition results already imply a linear two-way
classical communication requirement for parallel MS
games~\cite{JainKundu2025,HasegawaLeGallModanese2025}. These results establish the linear scaling but are not aimed at
determining the optimal communication rate. Here, we show that the largest winning-rectangle probability
is multiplicative under parallel repetition, yielding the
finite-copy lower bound $n\log_2(3/2)$ for MS. Together with an
asymptotically matching one-way simulation protocol, this determines
the exact asymptotic simulation rate of $\log_2(3/2)$ bits per copy. Moreover, one-way communication already attains this
rate, so allowing two-way communication does not reduce the
asymptotic cost.

This exact asymptotic rate also provides a natural point of comparison
with existing approaches. In particular, the nonlocal-capacity
framework of Ref.~\cite{MontinaWolf:NJP:2016} gives a general
characterization of the one-way asymptotic exact-simulation cost in
terms of an entropic cost maximized over input distributions. The two approaches use
different cost notions and communication models, and their comparison
helps clarify how the asymptotic simulation cost depends on these
choices.

For finite copy numbers, however, the exact MS-winning simulation costs remain largely open. Nevertheless,
Proposition~\ref{prop:magic-relabeling-app} shows that any perfectly
winning MS protocol can be converted, by output relabeling,
into an exact simulator without additional communication. The
covering values $T(n,3)$ in Appendix~\ref{app:covering-finite} therefore provide finite-copy upper bounds for simulating the full
correlation. In particular, $T(2,3)=3$ recovers the one-trit
winning protocol of Ref.~\cite{Marton24}. Explicit values for up to
nine copies are listed in ~\cref{tab:covering-small-app}.

Beyond the finite-copy cost itself, one can ask how large a Bell
scenario is sufficient to exceed a prescribed communication budget. Under our two-way input-average bounds, a budget of $B$ bits is
exceeded by $n^{\MS}_B\approx1.71B$ copies of MS, compared
with $n^{\mathrm{CHSH}}_B\approx21.61B$ copies of CHSH. In terms of the number of settings, outcomes (local Hilbert space dimension), the smaller number of copies required for MS more than compensates for its slight disadvantage in the single-copy scenario, since $3^{n^{\MS}_B} <4^{n^{\MS}_B}\ll 2^{n^{\mathrm{CHSH}}_B}$. For comparison, the worst-case bound of
M\'arton \emph{et al.}~\cite{Marton24} corresponds to a sufficient
copy number of $\tilde{n}^{\mathrm{CHSH}}_B\approx 12.93B$ for a budget of $B$
bits. Their bound therefore requires fewer CHSH copies than ours, but
is derived for one-way communication with worst-case cost, whereas
our bound allows two-way communication and considers input-average
cost. At large budgets, the MS construction consequently requires fewer
settings and outcomes than either CHSH construction, although all
three scenario sizes still grow exponentially with $B$.

The exponential growth of these constructions naturally raises the
question of whether other families can yield smaller finite
separations. The Deutsch--Jozsa family is a natural candidate, since
its communication lower bounds grow more rapidly with the number of
shared Bell pairs~\cite{Brassard1999,Buhrman98}. Turning these
asymptotic bounds into an explicit scenario size for a prescribed
communication budget, however, would require keeping track of the
corresponding constants. For small instances, perfect local strategies are known for
$m=1,2,3$, whereas zero communication is ruled out for
$m=4$~\cite{Galliard:IEEEISIT:2002,
Galliard:IEEEISIT:2003,Brassard05}. These results establish a finite
separation from local models, but do not by themselves provide an
explicit construction exceeding an arbitrary nonzero input-average
communication budget.

\begin{table}[t]
\centering
\begin{tabular}{c r r r}
\toprule
$n$ & $T(n,3)$
& $n\log_2(3/2)$
& $\log_2T(n,3)$ \\
\midrule
1 & 2  & 0.585 & 1.000 \\
2 & 3  & 1.170 & 1.585 \\
3 & 5  & 1.755 & 2.322 \\
4 & 8  & 2.340 & 3.000 \\
5 & 12 & 2.925 & 3.585 \\
6 & 18 & 3.510 & 4.170 \\
7 & 29 & 4.095 & 4.858 \\
8 & 44 & 4.680 & 5.459 \\
9 & 68 & 5.265 & 6.087 \\
\bottomrule
\end{tabular}
\caption{Winning-rectangle lower bound and two-row-covering upper
bound, in bits, for exact simulation of
$\vec{P}_{\MS}^{\otimes n}$ and for perfectly winning $n$ parallel
MS games. Values are rounded to three decimal places.
The bounds apply to both one-way and two-way communication and to
both input-average and worst-case costs. See Appendix~\ref{app:covering-finite} for details.}
\label{tab:covering-small-app}
\end{table}

A second route toward smaller finite scenarios is to develop the
hierarchy of Theorem~\ref{Thm:LowerBoundHierarchy}, which applies
directly to individual correlations rather than through a chosen
nonlocal game. Its practical strength depends on how tightly the
communication cost can be bounded at computationally accessible
levels. The numerical results in Table~\ref{Tbl.Summary} concern only
the first level and small scenarios. Since this level cannot
certify a communication cost above one bit, higher levels are needed
to obtain larger lower bounds.  

Further improvements may also come from incorporating no-signaling
constraints into communication-assisted Bell expressions and the
relaxations used to evaluate them~\cite{Barrett:PRA:2005,
ZC:Entropy:2019,Tavakoli2023}. Together, these approaches may help
identify quantum correlations whose simulation exceeds a prescribed
communication budget with fewer settings or outcomes.

In addition to reducing the size of the Bell scenario, experimental
applications also require some tolerance to noise. In this respect,
Lemma~\ref{lem:per-copy-cost-main} allows the winning probabilities
to vary across copies. As long as their average exceeds the classical
value by a fixed positive amount, the input-average communication
cost still grows at least linearly with the number of copies. This
remains true even if some copies perform below the classical value,
or if the probability of winning all copies becomes increasingly
small, as can occur for parallel CHSH correlations. It would be
interesting to extend these bounds to include simulation error and
detector loss, building on the connection between communication and
the detection loophole~\cite{Massar:PRA:2002,Laplante:2012}.

A further direction is to extend these results to multipartite
pseudo-telepathy games~\cite{Braverman:SIAM:2025}.  More generally, these results may be useful in settings where strict
space-like separation is replaced by an explicit bound on classical
communication. They constrain classical simulation models with
limited cross-talk or information leakage, including models that
allow fully interactive signaling~\cite{Tabia:2025aa}. Applying such bounds to
device-independent randomness expansion, quantum key distribution, or other device certification tasks~\cite{CCC+24}
would require a more detailed treatment of noise, finite-statistics effects, and the structure of the corresponding protocols. The
methods developed here nevertheless provide a framework for
quantifying how much classical communication is compatible with
observed quantum correlations in finite Bell scenarios.

{\em Note added:} While completing this manuscript, we became aware of the complementary work of~\cite{deGois:arXiv:2026}, which shows that correlations generated by a maximally entangled ququart pair under a continuous family of binary projective measurements cannot be simulated exactly using any fixed finite one-way message alphabet.

\begin{acknowledgments}
We thank Harry Buhrman, Mu-En Liu, Istv\'an M\'arton, K\'aroly F. P\'al, Stefano Pironio, Ashutosh Rai, Peter Sidajaya, and Tam\'as V\'ertesi for helpful discussions.  This work was supported by the National Science and Technology Council, Taiwan (Grants No. 109-2112-M-006-010-MY3, 109-2811-M-006-513, 110-2811-M-006-511, 111-2811-M-006-013, and 112-2628-M-006-007-MY4), the Foxconn Research Institute, Taipei, Taiwan, and in part by the Higher Education Sprout Project, Ministry of Education, to the Headquarters of University Advancement at National Cheng Kung University.
\end{acknowledgments}

\section*{AI use statement}
The authors used ChatGPT-5.6 Sol and ChatGPT-6 Astra to assist with proof ideas, checks of mathematical arguments, and revisions to the writing. These discussions helped improve the communication lower bounds for parallel MS and CHSH correlations and develop an MS upper bound through a connection to a known combinatorial covering problem. All use of these tools was under the authors’ direction. The authors are responsible for the correctness of the results and the final content of the manuscript.

\appendix

\section{Communication costs}
\label{app:costs}

A protocol tree is said to be \emph{reduced} if every internal node
has at least two reachable children, and every branch is reached
by some input. The following lemma shows that any deterministic protocol can be replaced by a finite reduced tree without
increasing the cost of any input pair.

\begin{lemma}[Finite reduced-tree normal form]
\label{lem:finite-tree-hierarchy-app}
For finite $X$ and $Y$, every deterministic protocol
of type $\sigma$ can be reduced without increasing the cost on any
input pair, so that every internal node has at least two reachable
children. Every root-to-leaf path then contains at most $|X|-1$
internal nodes if $\sigma=\to$, and at most $|X|+|Y|-2$ if
$\sigma=\leftrightarrow$. Consequently, for every deterministic
correlation $\vec{D}$ compatible with $\sigma$, its minimum input-average communication cost, cf.~\cref{eq:input-average-cost}, and its minimum
worst-case communication cost, cf.~\cref{eq:worst-case-cost} are {\em finite}.
\end{lemma}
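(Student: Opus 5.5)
We prove the statement by an explicit tree surgery that never lengthens any execution path. Then we count how many times the set of compatible inputs can shrink along a path.

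Fix a deterministic protocol, represented by its public tree. For each node $v$, let $X_v\times Y_v$ be the set of input pairs whose execution reaches $v$; by the rectangle property discussed in Sec.~\ref{sec:winning-rectangles-main}, this is indeed a rectangle. First, delete every branch (and its subtree) that is reached by no input pair, and delete every node with $X_v\times Y_v=\emptyset$. Next, repeatedly handle any internal node $v$ that has exactly one reachable child. Contract it: replace $v$ by its unique child, so the sender's message at $v$ is simply omitted. Both parties can perform this omission consistently, because the omitted symbol is determined by the transcript alone.

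Each such step leaves unchanged the leaf, and hence the output functions, reached by every input pair, so the correlation $\vec D$ is unchanged. Each step also removes a term $\log_2 m(v)\ge 0$, or strictly shrinks an alphabet, from some paths, so no cost increases. If an alphabet shrinks from $m$ to $m'<m$ reachable symbols, we relabel and charge $\log_2 m'\le\log_2 m$. Each operation strictly decreases the number of nodes or the total alphabet size, so the process terminates in a reduced tree.

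For the depth bound, the key observation is this. At an internal node $v$ where Alice speaks, her children partition $X_v$ into the sets of her inputs sending each symbol, while $Y_v$ is passed unchanged. Reducedness means at least two parts are nonempty, so each child has $|X_{v'}|\le|X_v|-1$ and $|Y_{v'}|=|Y_v|$; the situation for Bob is symmetric. Along any root-to-leaf path, the quantity $|X_v|+|Y_v|$ starts at $|X|+|Y|$, drops by at least one at every internal node, and remains at least $2$ at the leaf, since the leaf rectangle is nonempty. Hence there are at most $|X|+|Y|-2$ internal nodes. For $\sigma=\to$, only Alice speaks and $|X_v|$ drops from $|X|$ to at least $1$, giving at most $|X|-1$ internal nodes.

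Finiteness then follows from two facts. First, after reduction each alphabet satisfies $m(v)\le|X_v|$ (or $\le|Y_v|$), which is finite. Second, each path has boundedly many nodes. So every path cost is finite, and in fact bounded by $\sum\log_2|X_v|\le\log_2 |X|!$, for instance. The number of reduced trees with bounded depth and bounded branching, together with output assignments, is finite, so the minima in \cref{eq:det-cost-main} and \cref{eq:worst-case-cost} for a fixed $\vec D$ are attained over a finite nonempty set. The set is nonempty because $\vec D$ is assumed compatible with $\sigma$, so some protocol exists, and its reduction is in the set.

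The main obstacle is bookkeeping in the reduction step. We must check that removing a single-child node is legitimate when the sender would otherwise have used that message to encode information later needed. This is impossible: the reachable set at the single child already equals $X_v\times Y_v$, so the transcript carries no information. Care is also needed for the variable-length one-way case, where "one-way" must remain one-way after contraction. This holds trivially, since we only delete Alice's nodes.
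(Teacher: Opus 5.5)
Your proposal is correct and takes essentially the same route as the paper. Both arguments delete unreachable message symbols and contract single-child nodes, note that $|S|+|T|$ (your $|X_v|+|Y_v|$) drops by at least one at every internal node of the reduced tree, bound each alphabet by the size of the local input set, and conclude that only finitely many finite-cost reduced trees need to be considered. One small slip: your illustrative path-cost bound $\log_2|X|!$ holds only for $\sigma=\to$; in the two-way case the same argument gives $\log_2(|X|!\,|Y|!)$, which does not affect the finiteness conclusion.
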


\begin{proof}
In a deterministic protocol tree, each given input pair follows a {\em unique}
sequence of senders and messages, i.e., the sequence defines a transcript. Different input
pairs may produce the same transcript. At any node, the messages exchanged so far separately constrain Alice's and Bob's
possible inputs. Consequently, the input pairs reaching that node
form a rectangle $S\times T$, where $S$ and $T$ are the respective
sets of inputs compatible with those messages. Further discussion
of these input rectangles can be found in
\cref{sec:winning-rectangles-main}.

We now reduce the tree by deleting message symbols that are not
produced by any compatible input and suppressing nodes with only
one reachable child. These operations preserve the outputs and
do not increase the cost on any input pair. At each remaining
Alice's node, the outgoing messages partition $S$ into at least two
nonempty subsets. Along every outgoing edge, Alice's compatible
input set is therefore replaced by a proper subset of $S$, while
$T$ is unchanged. Similarly, at a Bob node, $T$ is replaced by a
proper subset while $S$ is unchanged. Thus, $|S|+|T|$ decreases
by at least one along every edge leaving an internal node.
Starting from $|X|+|Y|$ and ending with nonempty local input sets,
a root-to-leaf path can contain at most $|X|+|Y|-2$ internal nodes.
For a one-way protocol from Alice to Bob, every internal node is
an Alice's node, so $|S|$ decreases by at least one at each step,
giving the sharper bound $|X|-1$.

The number of reachable messages is at most $|X|$ at an
Alice node and at most $|Y|$ at a Bob node. Since both
the depth and message alphabet sizes are bounded, every root-to-leaf path has a finite communication cost.
Moreover, the input and output sets are finite, so only
finitely many reduced deterministic protocol trees need
be considered, up to relabeling of message symbols.
For any fixed deterministic correlation $\vec{D}$
compatible with $\sigma$, its input-average and worst-case
communication costs can therefore be minimized over a
finite set of finite-cost protocols. Both minima are
thus finite.

\end{proof}

By \cref{lem:finite-tree-hierarchy-app}, a finite Bell scenario admits
only finitely many reduced deterministic protocol types. For any $\Lambda=\lambda$, the protocol reduces to one of these
deterministic types. $\lambda$ that produce the same reduced
protocol can therefore be combined by adding their probabilities.
Hence, it is sufficient to consider $\Lambda$ with finite
support.

Recall that $\Cb_\mu^\sigma(\vec{P})$ and
$\Cwc^\sigma(\vec{P})$ denote the input-average and worst-case
communication costs defined in \cref{sec:preliminaries}. An intermediate
notion commonly used in communication complexity is the worst-input
expected cost,
\begin{equation}
 \Cexp^{\sigma}(\vec{P})
 :=
 \min_{\mathcal R^{\sigma}\to\vec{P}}
 \max_{x,y}
 \E_{p_\Lambda}
 [L_{\mathcal R^\sigma}(x,y,\Lambda)].
\end{equation}
The three quantities satisfy
\begin{equation}
 \Cb_\mu^{\sigma}(\vec{P})
 \leq
 \Cexp^{\sigma}(\vec{P})
 \leq
 \Cwc^{\sigma}(\vec{P}).
\end{equation}
The distinction lies in the order of averaging and maximization.
The input-average cost averages over both the inputs and $\Lambda$, whereas $\Cexp^\sigma$ averages over $\Lambda$ before maximizing over the inputs. The worst-case cost
bounds every execution. 

All three cost notions use the same convention for assigning a cost to
each message. A message with alphabet size $m$ is assigned a cost of
$\log_2 m$ bits, so binary protocols correspond to $m=2$ at every
internal node. When the message probabilities are nonuniform, a
variable-depth binary encoding may have a smaller average cost than a
fixed-length encoding. Variable-depth encodings are therefore included in our protocol model, with the communication cost determined by the length of the
path actually taken.

\subsection{Message alphabets and finite simulation cost}
\label{app:finite-simulation-cost}

For a deterministic correlation $\vec{D}\in\Dset^\to$,
Alice's output depends only on $x$, while Bob's output
may depend on both $x$ and $y$. Alice can send her input
$x$ using an alphabet of size $|X|$, partitioning the
input space into rectangles $\{x\}\times Y$.
She can determine her output from her own input, while
Bob knows both inputs after receiving $x$ and can therefore
determine his output as well. Thus, no further communication
is needed, giving $c_\mu^\to(\vec{D})\leq\log_2|X|$.

For $\vec{D}\in\Dset^{\leftrightarrow}$, Alice and Bob
can exchange their inputs, partitioning the input space
into rectangles $\{x\}\times\{y\}$, each containing only
one input pair. Both parties then know the full input pair
and can produce the outputs specified by $\vec{D}$
without further communication. Hence,
$c_\mu^{\leftrightarrow}(\vec{D})
\leq\log_2|X|+\log_2|Y|$.

Both bounds hold for any input distribution $\mu$,
since the constructed protocols have the same communication
cost on every input pair.

\section{Proof of the winning-rectangle bound}
\label{app:leaf}

This appendix proves Lemma~\ref{lem:single-rectangle-cost-main}.
We use two standard properties of deterministic communication
protocols~\cite{KushilevitzNisan1997}. First, for any $\Lambda = \lambda$, the set of input pairs consistent with any transcript is a
rectangle $S\times T$. Second, every finite reduced protocol tree
satisfies
\begin{equation}
 \sum_{\ell\in\mathrm{Leaves}}2^{-L(\ell)}=1.
 \label{eq:kraft-app}
\end{equation}
Indeed, assign weight one to the root. At each internal node $v$,
divide its weight equally among its $m(v)$ children, so each child
receives a fraction $2^{-\log_2 m(v)}$ of its parent's weight.
Multiplying these factors along each root-to-leaf path gives
the leaf weight $2^{-L(\ell)}$. Since each branching preserves
the total weight, the leaf weights sum to one, giving
\cref{eq:kraft-app}.

\begin{proof}[Proof of Lemma~\ref{lem:single-rectangle-cost-main}]
For any $\Lambda=\lambda$, we may, without loss of generality, take the protocol to be deterministic, then the inputs reaching each leaf $\ell$ form
a rectangle. Since the protocol wins with certainty and $\mu$ has
full support, this rectangle must be a winning rectangle. Hence, if $p_\ell :=  \Prob_{\mu}[\text{reach }\ell]$,
then $p_\ell\leq\alpha$ for every leaf.

We next relate the entropy of the leaf distribution to the
communication cost. Define $q_\ell:=2^{-L(\ell)}$.
By \cref{eq:kraft-app}, we have $ \sum_\ell q_\ell=1$.
The following is the standard entropy argument for average
communication~\cite{DietzfelbingerWunderlich2007}. The nonnegativity
of the relative entropy of $p=\{p_\ell\}_\ell$  from $q=\{q_\ell\}_\ell$ gives
\begin{align}
	 0 \leq D(p\Vert q) 
	 &= \sum_{\ell} p_\ell\log_2\frac{p_\ell}{q_\ell} \nonumber\\
	 &= -H(p)+\sum_\ell p_\ell L(\ell),
\end{align}
and hence $\E_\mu[L|\Lambda=\lambda] \geq H(p)$.
Moreover, since $p_\ell\leq\alpha$ for every leaf,
\begin{equation}
	 H(p) = \sum_{\ell} p_\ell\log_2\frac1{p_\ell} \geq  \log_2\frac1{\alpha}.
\end{equation}
Thus,
\begin{equation}
	 \E_\mu[L|\Lambda=\lambda] \geq \log_2\frac1{\alpha}.
 \label{eq:rectangle-entropy-app}
\end{equation}

For a randomized protocol that wins with certainty, every value
$\lambda$ occurring with positive probability induces a deterministic protocol that also wins with certainty. Applying
\cref{eq:rectangle-entropy-app} for each such $\lambda$ and averaging over the protocol randomness gives
\begin{equation}
	 \E_{\mu,p_\Lambda}[L] \geq \log_2\frac1{\alpha}.
\end{equation}
Minimizing over all perfectly winning protocols proves the claim.
\end{proof}

\section{Bounds from winning probabilities}
\label{app:robust-average}

\subsection{Communication bound for a single game}
\label{app:single-game-proof}

\begin{proof}[Proof of Lemma~\ref{lem:single-game-cost-main}]
Consider any classical protocol for $G$. Let $Z=(X,Y)$ denote the
input pair, distributed according to $\mu$, and let $\Lambda$ collect
all protocol randomness, distributed according to $p_\Lambda$ and
independent of the inputs. Let $T$ be the random variable representing a complete transcript from the root to a leaf and
$L$ the communication cost.

For each fixed $\Lambda=\lambda$, the protocol is deterministic.
The entropy argument used in the proof of
Lemma~\ref{lem:single-rectangle-cost-main} therefore gives
\begin{equation}
 \E_\mu[L|\Lambda=\lambda]
 \geq
 H(T|\Lambda=\lambda),
\end{equation}
where we have used the fact that a complete transcript  $T$ is in one-to-one correspondence with some path $\ell$. Averaging over $\lambda$ while using the relation between conditional mutual information and conditional entropies yields
\begin{equation}
	 \E_{\mu,p_\Lambda}[L] \geq H(T|\Lambda) = I(Z;T|\Lambda).
 \label{eq:single-game-information-app}
\end{equation}
Here the equality follows from $H(T|Z,\Lambda)=0$, since the inputs
and protocol randomness completely determine the transcript.

Since $\Lambda$ is independent of $Z$, the mutual information admits
the representation
\begin{equation}
 I(Z;T|\Lambda)
 =
 \E_{p_{T,\Lambda}}
 \left[
 D\left(
 P_{Z|T,\Lambda}\middle\|\mu
 \right)
 \right].
 \label{eq:single-game-posterior-information-app}
\end{equation}
Here $p_{T,\Lambda}$ denotes the joint distribution of the transcript
and protocol randomness induced by the protocol. Thus,
$I(Z;T|\Lambda)$ is the average relative entropy between the
conditional input distribution $P_{Z|T,\Lambda}$ and the original
input distribution $\mu$.

We next relate a change in the input distribution to the resulting
change in winning probability. Consider a fixed local strategy, and
let $f(x,y)$ denote its probability of winning on input $(x,y)$. Under the original input distribution
$\mu$, its winning probability is
\begin{equation}
 r
 :=
 \sum_{x,y}\mu(x,y)f(x,y)
 \leq
 \omega_{\mathrm c}.
\end{equation}
Under another input distribution $\nu$, the same strategy wins with
probability
\begin{equation}
 w
 :=
 \sum_{x,y}\nu(x,y)f(x,y).
\end{equation}
Applying the same binary channel to both input distributions, with
``win'' occurring with probability $f(x,y)$ and ``loss'' otherwise,
the data-processing inequality gives
\begin{equation}
 D(\nu\Vert\mu)
 \geq
 \Dbin(w\Vert r).
\end{equation}

Since our goal is to obtain a communication lower bound from the winning
probability alone, a winning probability not exceeding
$\omega_{\mathrm c}$ does not certify any need for communication, as it
does not surpass the classical value. We therefore take the lower bound
to be zero when $w\leq\omega_{\mathrm c}$.

For $w>\omega_{\mathrm c}$, we have
$r\leq\omega_{\mathrm c}<w$. For fixed $w$, the binary relative
entropy decreases as its second argument approaches $w$, and hence
\begin{equation}
 \Dbin(w\Vert r)
 \geq
 \Dbin(w\Vert\omega_{\mathrm c}).
\end{equation}
Using $\Phi_{\omega_{\mathrm c}}$ defined in
\cref{eq:per-copy-phi-main}, the two regimes can therefore be summarized as
\begin{equation}
 D(\nu\Vert\mu)
 \geq
 \Phi_{\omega_{\mathrm c}}(w).
 \label{eq:local-winning-divergence-app}
\end{equation}
This inequality holds for any fixed local strategy and any input
distribution $\nu$.

We now apply this comparison transcript by transcript. Fix $t$ and
$\lambda$ such that
$\Prob[T=t|\Lambda=\lambda]>0$. The input pairs consistent with
transcript $t$ form a rectangle
$A_{t,\lambda}\times B_{t,\lambda}$. On this rectangle, the outputs
are determined by local functions
$a_{t,\lambda}(x)$ and $b_{t,\lambda}(y)$. Extend these functions
arbitrarily to all local inputs. For fixed $t$ and $\lambda$, the
extended functions therefore define a local strategy without
communication on the full input space.

For this local strategy, let $f_{t,\lambda}(x,y)$ denote its
probability of winning on input $(x,y)$. Conditioned on the transcript
$t$ and randomness $\lambda$, the input pair is distributed according to
\begin{equation}
 \nu_{t,\lambda}(x,y)
 :=
 \Prob[Z=(x,y)|T=t,\Lambda=\lambda],
 \label{eq:update-distribution}
\end{equation}
which may vanish for some input pairs. Its winning probability under this conditional distribution is
\begin{equation}
 w_{t,\lambda}
 :=
 \sum_{x,y}
 \nu_{t,\lambda}(x,y)f_{t,\lambda}(x,y)
 =
 \Prob[\text{win }G|T=t,\Lambda=\lambda].
\end{equation}

Under the original input distribution $\mu$, the same local strategy
wins with probability
\begin{equation}
 r_{t,\lambda}
 :=
 \sum_{x,y}\mu(x,y)f_{t,\lambda}(x,y)
 \leq
 \omega_{\mathrm c}.
\end{equation}
Applying \cref{eq:local-winning-divergence-app} with
$\nu=\nu_{t,\lambda}$ therefore gives
\begin{equation}
 D(\nu_{t,\lambda}\Vert\mu)
 \geq
 \Phi_{\omega_{\mathrm c}}(w_{t,\lambda}).
 \label{eq:single-game-posterior-app}
\end{equation}

We now average the bound in
\cref{eq:single-game-posterior-app} over the transcript and protocol
randomness. The function $\Phi_{\omega_{\mathrm c}}$ is convex and nondecreasing.
It is constant on $[0,\omega_{\mathrm c}]$, while
$\Dbin(w\Vert\omega_{\mathrm c})$ is convex and increasing for
$w>\omega_{\mathrm c}$, with the two branches joining smoothly at
$w=\omega_{\mathrm c}$. Using
\cref{eq:single-game-information-app,eq:single-game-posterior-app,eq:update-distribution,eq:single-game-posterior-information-app}
and the convexity of $\Phi_{\omega_{\mathrm c}}$, we obtain
\begin{align}
 \E_{\mu,p_\Lambda}[L]
 &\geq I(Z;T|\Lambda)
 \nonumber\\
 &\geq
 \E_{p_{T,\Lambda}}
 [\Phi_{\omega_{\mathrm c}}(w_{T,\Lambda})]
 \nonumber\\
 &\geq
 \Phi_{\omega_{\mathrm c}}
 \left(
 \E_{p_{T,\Lambda}}w_{T,\Lambda}
 \right)
 \nonumber\\
 &=
 \Phi_{\omega_{\mathrm c}}
 \bigl(\Prob[\text{win }G]\bigr).
 \label{eq:single-game-protocol-app}
\end{align}
If the protocol exactly simulates $\vec{P}$, then
$\Prob[\text{win }G]=w_G(\vec{P})$. Since the resulting lower bound
holds for every exact simulation protocol, it also bounds the
minimum communication cost, proving \cref{eq:single-game-cost-main}.
\end{proof}

\subsection{Extension to parallel games}
\label{app:robust-product-proof}

We next extend the winning-probability bound to parallel games.
Unlike the single-game result, the argument below requires Alice's
and Bob's inputs within each copy to be independent, so we assume
$\mu=\mu_A\times\mu_B$.

\begin{proof}[Proof of Lemma~\ref{lem:per-copy-cost-main}]
Let $Z_i=(X_i,Y_i)$ and $Z^n=(Z_1,\ldots,Z_n)$, where the
$Z_i$ are independent and distributed according to
$\mu=\mu_A\times\mu_B$. Let $\Lambda$ collect all protocol
randomness, and let $T$ and $L$ denote the complete transcript and
communication cost.

As in the first inequality of \cref{eq:single-game-protocol-app},
\begin{equation}
 \E_{\mu^{\otimes n},p_\Lambda}[L]
 \geq
 I(Z^n;T|\Lambda).
 \label{eq:info-step1}
\end{equation}
Since the copies are independent, subadditivity of conditional entropy
then gives
\begin{align}
 I(Z^n;T|\Lambda)
 &=
 \sum_{i=1}^n H(Z_i)-H(Z^n|T,\Lambda)
 \nonumber\\
 &\geq
 \sum_{i=1}^n I(Z_i;T|\Lambda).
 \label{eq:info-step2}
\end{align}

Fix $i$, $t$, and $\lambda$ with
$\Prob[T=t|\Lambda=\lambda]>0$. The full input strings consistent
with $t$ form a rectangle
$A_{t,\lambda}\times B_{t,\lambda}$. From the output
functions at this leaf, construct a one-copy local strategy as follows.
Given $x_i$, Alice samples her remaining inputs according to their
conditional distribution given $X_i=x_i$, $T=t$, and
$\Lambda=\lambda$, and Bob does the same independently given
$Y_i=y_i$, $T=t$, and $\Lambda=\lambda$. Each party then applies
the output function and keeps the $i$th output.

Because $\mu=\mu_A\times\mu_B$ and the transcript defines a rectangle,
these independent local completions reproduce the conditional
distribution of the remaining inputs. Hence, with
\begin{equation}
 \nu_{i,t,\lambda}(x,y)
 :=
 \Prob[Z_i=(x,y)|T=t,\Lambda=\lambda],
\end{equation}
the winning probability of this local strategy under
$\nu_{i,t,\lambda}$ is
\begin{equation}
 w_{i,t,\lambda}
 =
 \Prob[W_i|T=t,\Lambda=\lambda],
\end{equation}
where $W_i$ denotes the event that copy $i$ is won. Under the original distribution $\mu$, the same local strategy has
winning probability at most $\omega_{\mathrm c}$. Therefore, by
\cref{eq:local-winning-divergence-app},
\begin{equation}
 D(\nu_{i,t,\lambda}\Vert\mu)
 \geq
 \Phi_{\omega_{\mathrm c}}(w_{i,t,\lambda}).
\end{equation}

Applying the same averaging and convexity argument as in the
single-game proof gives
\begin{equation}
 I(Z_i;T|\Lambda)
 \geq
 \Phi_{\omega_{\mathrm c}}
 \bigl(\Prob[W_i]\bigr).
 \label{eq:per-copy-info-app}
\end{equation}
Combining
\cref{eq:info-step1,eq:info-step2,eq:per-copy-info-app} yields
\begin{equation}
 \E_{\mu^{\otimes n},p_\Lambda}[L]
 \geq
 \sum_{i=1}^n
 \Phi_{\omega_{\mathrm c}}
 \bigl(\Prob[W_i]\bigr),
\end{equation}
which proves Lemma~\ref{lem:per-copy-cost-main}.
\end{proof}

\begin{proof}[Proof of Theorem~\ref{thm:robust-product-main}]
Consider any protocol that wins all $n$ copies simultaneously with
probability at least $\omega$. Its probability $w_i$ of winning copy
$i$ satisfies $w_i\geq\omega$ for every $i$. Since
$\omega>\omega_{\mathrm c}$ and
$\Phi_{\omega_{\mathrm c}}$ is nondecreasing,
Lemma~\ref{lem:per-copy-cost-main} gives
\begin{align}
\E_{\mu^{\otimes n},p_\Lambda}[L_{\mathcal R}]
&\geq \sum_{i=1}^n \Phi_{\omega_{\mathrm c}}(w_i)
\nonumber\\
&\geq \sum_{i=1}^n \Phi_{\omega_{\mathrm c}}(\omega)
\nonumber\\
&= n\Dbin(\omega\|\omega_{\mathrm c}).
\end{align}
Since the resulting lower bound
holds for every protocol giving $\omega$, it also bounds the
minimum communication cost, proving \cref{eq:robust-product-main}.
\end{proof}

\section{Multiplicativity of winning-rectangle probabilities}
\label{app:tensor}

\begin{proof}[Proof of Lemma~\ref{lem:tensor-main}]
For the lower bound, take optimal winning rectangles
$S_k\times T_k$ and $S_\ell\times T_\ell$.
Their product gives the explicit winning rectangle
$(S_k\times S_\ell)\times(T_k\times T_\ell)$ for $k+\ell$ copies.
Its probability is $\alpha_k\alpha_\ell$, and hence
\begin{equation}
 \alpha_{k+\ell}\geq\alpha_k\alpha_\ell.
\end{equation}

For the reverse inequality, let $S\times T$ be an optimal winning
rectangle for $k+\ell$ copies, so that
\begin{equation}
 \mu^{\otimes(k+\ell)}(S\times T)
 =
 \alpha_{k+\ell}.
\end{equation}
Write Alice's input as
$(u,u')\in X^k\times X^\ell$ and Bob's as
$(v,v')\in Y^k\times Y^\ell$.
Let $U$ and $V$ denote the projections of $S$ and $T$ onto their first
$k$ coordinates. For each $u\in U$ and $v\in V$, choose local
extensions $u'(u)$ and $v'(v)$. Applying the original response
functions to these extended inputs and retaining only the first $k$
outputs shows that $U\times V$ is a winning rectangle for the first
$k$ copies. Hence,
\begin{equation}
 \mu^{\otimes k}(U\times V)
 \leq
 \alpha_k.
 \label{eq:projection-app}
\end{equation}

For $u\in U$ and $v\in V$, define
\begin{align}
 S_u
 &:=
 \{u'\in X^\ell:(u,u')\in S\},
 \nonumber\\
 T_v
 &:=
 \{v'\in Y^\ell:(v,v')\in T\}.
\end{align}
Fixing $(u,v)$ in the original response functions and retaining only
the last $\ell$ outputs shows that $S_u\times T_v$ is a winning
rectangle for the last $\ell$ copies. Therefore,
\begin{equation}
 \mu^{\otimes\ell}(S_u\times T_v)
 \leq
 \alpha_\ell.
 \label{eq:fiber-app}
\end{equation}

Using independence between the first $k$ and last $\ell$ copies,
\begin{align}
 \alpha_{k+\ell}
 &=
 \mu^{\otimes(k+\ell)}(S\times T)
 \nonumber\\
 &=
 \sum_{u\in U}\sum_{v\in V}
 \mu^{\otimes k}(u,v)
 \mu^{\otimes\ell}(S_u\times T_v)
 \nonumber\\
 &\leq
 \alpha_\ell\,
 \mu^{\otimes k}(U\times V)
 \nonumber\\
 &\leq
 \alpha_k\alpha_\ell.
\end{align}
Combining this with the lower bound gives
\begin{equation}
 \alpha_{k+\ell}
 =
 \alpha_k\alpha_\ell.
\end{equation}
Iterating the identity gives
$\alpha_n=\alpha_1^n$.
\end{proof}

\section{MS correlations and communication cost}
\label{app:magic}

\subsection{Quantum strategy}
\label{app:magic-strategy}

We use the MS game and notation introduced in \cref{sec:magic-main}, with uniform and independent inputs.
A perfect quantum strategy uses
\begin{equation}
\ket{\Phi_4}
=
\frac12\sum_{j=0}^{3}\ket j_A\ket j_B
\end{equation}
and the Mermin--Peres array~\cite{Peres1990,Mermin_90}
\begin{equation}
\begin{array}{ccc}
-Z\otimes I & -I\otimes Z & Z\otimes Z\\
I\otimes X & X\otimes I & X\otimes X\\
Z\otimes X & X\otimes Z & Y\otimes Y
\end{array}.
\label{eq:magic-observables-app}
\end{equation}
The three observables in each row commute and have product $+I$,
whereas those in each column commute and have product $-I$.
Alice jointly measures the observables in her row, while Bob jointly
measures the transposes of the observables in his column. Their three
eigenvalues are converted to output bits by mapping $+1$ to $0$ and
$-1$ to $1$. The row and column product constraints therefore give
Alice even parity and Bob odd parity.

For every observable $M$ in the array, $M^2=I$. Together with
\begin{equation}
 (M\otimes I)\ket{\Phi_4}
 =
 (I\otimes M^{\mathsf T})\ket{\Phi_4},
\end{equation}
this implies
\begin{equation}
 (M\otimes M^{\mathsf T})\ket{\Phi_4}
 =
 \ket{\Phi_4}.
\end{equation}
Hence Alice's and Bob's eigenvalues for the observable at the
intersection are identical with certainty. After mapping the
eigenvalues to bits, their intersection outputs therefore agree.

To obtain the full output distribution, label the rows and
columns of \cref{eq:magic-observables-app} by $0,1,2$ and
denote the observable in row $x$ and column $y$ by $O_{xy}$.
For an even-parity output $a=(a_0,a_1,a_2)$, Alice's joint
measurement projector is~\cite{Kundu:Quantum:2022}
\begin{equation}
 A_{a|x}
 =
 \frac14\left(I+\sum_{j=0}^{2}(-1)^{a_j}O_{xj}\right).
\end{equation}
Similarly, for an odd-parity output $b=(b_0,b_1,b_2)$,
Bob's joint measurement projector is
\begin{equation}
 B_{b|y}
 =
 \frac14\left(I+\sum_{i=0}^{2}(-1)^{b_i}
 O_{iy}^{\mathsf T}\right).
\end{equation}
These expressions follow from the commutation and product
constraints within each row and column.

The nine observables are traceless and mutually orthogonal
under the trace inner product,
\begin{equation}
 \operatorname{tr}(O_{xj}O_{iy})
 =
 4\delta_{xi}\delta_{jy}.
\end{equation}
Using
$\langle\Phi_4|A\otimes B|\Phi_4\rangle
=\frac14\operatorname{tr}(AB^{\mathsf T})$,
we therefore obtain
\begin{align}
 P_{\MS}(a,b|x,y)
 &=
 \langle\Phi_4|A_{a|x}\otimes B_{b|y}|\Phi_4\rangle
 \nonumber\\
 &=
 \frac14\operatorname{tr}
 \left(A_{a|x}B_{b|y}^{\mathsf T}\right)
 \nonumber\\
 &=
 \frac{1+(-1)^{a_y+b_x}}{16}.
 \label{eq:magic-output-probability-app}
\end{align}
Thus, for each input pair, the eight winning output pairs
each occur with probability $1/8$, while all other output
pairs have probability zero.

\subsection{Output relabeling}
\label{app:output-relabeling}

\begin{proposition}[Output relabeling]
\label{prop:magic-relabeling-app}
For every $n\geq1$, any protocol that wins all $n$ MS games
with certainty for every input can be converted into an exact simulator
of $\vec{P}_{\MS}^{\otimes n}$ using only shared randomness. No
additional communication is required, and both the communication
model and the cost of every execution are preserved.
\end{proposition}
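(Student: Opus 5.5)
The plan is to compose the given perfectly winning protocol with a shared-random \emph{output relabeling}, applied independently to each copy. The relabeling must preserve the winning condition, and it must make the final output, for any fixed input string, uniform over the winning output pairs of each copy and independent across copies. By \cref{eq:magic-output-probability-app}, that uniform product distribution is exactly $\vec{P}_{\MS}^{\otimes n}$. The relabeling only post-processes local outputs using shared randomness that is independent of everything else. Hence it changes neither the communication model nor the cost of any execution.

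\textbf{Single-copy relabeling.} For one copy, let the parties share a uniformly random $3\times3$ binary matrix $s=(s_{ij})$ drawn from the group $\mathcal S$ of matrices whose every row and every column has even parity. Alice, with row $x$ and output $a$, outputs $a'_j=a_j\oplus s_{xj}$. Bob, with column $y$ and output $b$, outputs $b'_i=b_i\oplus s_{iy}$. Since each row and column of $s$ is even, Alice's parity stays even and Bob's stays odd. At the intersection both bits are flipped by the same $s_{xy}$, so $a'_y=b'_x$ still holds. Each party needs only its own input and $s$, so no communication is used.

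\textbf{Uniformity.} Fix $(x,y)$ and regard output pairs as vectors in $\mathbb F_2^3\times\mathbb F_2^3$. The winning set
\[
W_{xy}=\{(a,b):\ a\ \text{even},\ b\ \text{odd},\ a_y=b_x\}
\]
is an affine space. Its direction space is
\[
V_{xy}=\{(u,v):\ u,v\ \text{even},\ u_y=v_x\},
\]
which has $4\cdot4/2=8$ elements, matching $|W_{xy}|=8$. The relabeling adds $\phi_{xy}(s):=(s_{x\cdot},s_{\cdot y})$, and $\phi_{xy}$ is a linear map $\mathcal S\to V_{xy}$.

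The key step, and the one I expect to need the most care, is to show that $\phi_{xy}$ is surjective. I would do this by direct completion. Given an even row vector $u$ and an even column vector $v$ with $u_y=v_x$, place $u$ in row $x$ and $v$ in column $y$. The remaining $2\times2$ block (rows $\neq x$, columns $\neq y$) can then be filled by choosing one entry freely and fixing the others by the parity constraints. Consistency of the last constraint follows from the total parity of the matrix, which uses $u_y=v_x$ together with the evenness of $u$ and $v$.

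Surjectivity implies that $\phi_{xy}(s)$ is uniform on $V_{xy}$ when $s$ is uniform on $\mathcal S$, because a homomorphism pushes the uniform measure forward to the uniform measure on its image. Therefore, for any fixed winning pair $w\in W_{xy}$, the relabeled pair $w+\phi_{xy}(s)$ is uniform on $W_{xy}$.

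\textbf{$n$ copies.} Take independent uniform $s^{(1)},\dots,s^{(n)}\in\mathcal S$, independent of the protocol's randomness $\Lambda$, and apply the relabeling to copy $i$ using $s^{(i)}$ and that copy's inputs $(x_i,y_i)$. Condition on the inputs and on $\Lambda$. The original protocol's output lies in $W_{x_1y_1}\times\cdots\times W_{x_ny_n}$, because it wins every copy with certainty. By the single-copy uniformity, and since the $s^{(i)}$ are independent, the relabeled output is uniform on this product set, whatever the original outputs were. Averaging over $\Lambda$ preserves this, so the resulting protocol simulates $\vec{P}_{\MS}^{\otimes n}$ exactly, with the transcript and cost of every execution unchanged.
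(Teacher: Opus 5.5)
Your proposal is correct and follows essentially the same route as the paper: a shared, uniformly random even-parity $3\times3$ matrix whose row $x$ Alice adds to her output and whose column $y$ Bob adds to his, with uniformity established by completing the remaining $2\times2$ block and independent matrices used for the $n$ copies. The only difference is presentational: you phrase the uniformity step as a surjective homomorphism pushing the uniform measure forward, whereas the paper counts directly that each target winning pair arises from exactly $2$ of the $16$ matrices.
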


\begin{proof}
We first consider a single copy. As shown in
\cref{app:magic-strategy}, for each input pair $(x,y)$, the quantum
strategy produces the correlation
\cref{eq:magic-output-probability-app}, which is uniform over the
eight winning output pairs. It therefore suffices to show that any
winning output pair on inputs $(x,y)$ can be mapped, using shared randomness alone,
to a uniformly random winning pair.

Alice and Bob use shared randomness, independent of the inputs and of
the original protocol, to choose \emph{uniformly} a binary $3\times3$ matrix
$H$ whose rows and columns all have even parity. There are $16$ such
matrices: the four upper-left entries may be chosen freely, after which
the parity conditions uniquely determine the remaining entries.

Suppose the original protocol produces a winning pair $(a,b)$ on
inputs $(x,y)$. Alice adds row $x$ of $H$ to her output, while Bob adds
column $y$ of $H$ to his output, with addition modulo two:
\begin{equation}
 a'_j=a_j\oplus H_{xj},
 \qquad
 b'_i=b_i\oplus H_{iy}.
\end{equation}
Because every row and column of $H$ has even parity, the required
parities of Alice's and Bob's outputs are preserved. Moreover, at the
intersection both parties add the same bit $H_{xy}$, so
$a'_y=b'_x$ whenever $a_y=b_x$. Hence the relabeled outputs still form
a winning pair.

It remains to show that the eight winning pairs occur with equal
probability. Fix an original winning pair $(a,b)$ and a target winning
pair $(a',b')$. To map $(a,b)$ to $(a',b')$, the entries of $H$ along
row $x$ and column $y$ must satisfy
\begin{equation}
 H_{xj}=a_j\oplus a'_j,
 \qquad
 H_{iy}=b_i\oplus b'_i.
\end{equation}
Thus, the entire row $x$ and column $y$ are fixed. Since both the
original and target outputs satisfy the required parities, the
prescribed row and column have even parity. Their prescriptions also
agree at the intersection, because
\begin{equation}
 a_y=b_x,
 \qquad
 a'_y=b'_x.
\end{equation}

It remains only to complete the other four entries of $H$. Relabeling
the rows and columns so that the prescribed row and column come first,
write the fixed entries as
\begin{equation}
 H=
 \begin{pmatrix}
 p & q & p\oplus q\\
 r & * & *\\
 p\oplus r & * & *
 \end{pmatrix}.
\end{equation}
Once the upper-left entry of the remaining $2\times2$ block is chosen,
say $u\in\{0,1\}$, all other entries are uniquely determined by the
even-parity conditions:
\begin{equation}
 H=
 \begin{pmatrix}
 p & q & p\oplus q\\
 r & u & r\oplus u\\
 p\oplus r & q\oplus u & p\oplus q\oplus r\oplus u
 \end{pmatrix}.
\end{equation}
There are therefore exactly two compatible matrices, corresponding to
$u=0$ and $u=1$. Since $H$ is chosen uniformly from the $16$
even-parity matrices, any prescribed winning pair $(a',b')$ is obtained
with probability $ \frac{2}{16}=\frac18$.
Hence, the relabeled output is uniform over the eight winning pairs,
independently of the original winning output, as required by
\cref{eq:magic-output-probability-app}.

For $n$ copies, apply the same relabeling independently to each copy,
using independent matrices $H_1,\ldots,H_n$. Conditional on any fixed
input and any winning output of the original protocol, each copy is
mapped uniformly to one of its eight winning pairs. Independence of
the matrices therefore gives probability $(1/8)^n$ for every
specified winning output across all $n$ copies. This is precisely the
product correlation $\vec{P}_{\MS}^{\otimes n}$.

Since this distribution is independent of the original winning output,
averaging over the original protocol still yields
$\vec{P}_{\MS}^{\otimes n}$. The relabeling changes only the final
outputs, so all communicated messages, and hence the communication
model and cost of every execution, are unchanged.
\end{proof}

\subsection{One-way simulation from two-row coverings}
\label{app:covering-finite}

We now give the details of the covering construction used in the main
text. By \cref{prop:magic-relabeling-app}, it is sufficient to construct
a one-way protocol that wins all $n$ MS games with certainty.

Label the three rows and columns by $0,1,2$. For a single copy,
restricting Alice to any two rows removes the parity contradiction.
Thus, for each excluded row $r\in\{0,1,2\}$, there exists a
deterministic strategy that wins whenever Alice's input satisfies
$x\neq r$.

For $n$ copies, choose an excluded row for each copy and denote the excluded row for the $i$-th copy by $r_i$:
\begin{equation}
 r=(r_1,\ldots,r_n)\in\{0,1,2\}^n.
\end{equation}
This pattern can be used for Alice's input
$x=(x_1,\ldots,x_n)$ whenever
\begin{equation}
 x_j\neq r_j
 \qquad
 \text{for every }j.
\end{equation}
Accordingly, define the corresponding two-row box
\begin{equation}
 B_r
 :=
 \{x\in\{0,1,2\}^n:x_j\neq r_j\ \forall j\}.
\end{equation}
Each box contains $2^n$ inputs.

To cover all possible inputs of Alice, define
\begin{align}
 \tau_n
 :=
 \min\biggl\{
 |R|:
 &R\subseteq\{0,1,2\}^n,\\
 &\bigcup_{r\in R}B_r=\{0,1,2\}^n
 \biggr\}.
 \label{eq:tau-app}
\end{align}
Fix a minimum cover $R$. On input $x$, Alice chooses the first
$r\in R$ such that $x\in B_r$, sends its index, and the parties apply
the corresponding excluded-row strategy independently in each
coordinate. The protocol therefore wins on every input pair using a
single message from an alphabet of size $\tau_n$. Hence,
\begin{equation}
 \Cb_{\mu_{\MS}^{\otimes n}}^{\to}(\MS^{\otimes n})
 \leq
 C_{\mathrm{wc},\mu_{\MS}^{\otimes n}}^{\to}(\MS^{\otimes n})
 \leq
 \log_2\tau_n.
 \label{eq:tau-cost-app}
\end{equation}

The quantity $\tau_n$ is a known covering number with several
equivalent descriptions. Consider the ternary Hamming space $\{0,1,2\}^n$, with
distance $d_{\mathrm H}(x,r)$ defined as the number of coordinates
at which $x$ and $r$ differ~\cite{Brink2011}. Each set $B_r$ contains exactly the strings
at maximum Hamming distance from $r$,
\begin{equation}
 B_r
 =
 \{x:d_{\mathrm H}(x,r)=n\}.
\end{equation}
Thus, $\tau_n$ is the inverse football-pool number
$T(n,3)$~\cite{OstergardRiihonen2003,Brink2011}, i.e., the minimum
number of bets such that every outcome vector is completely missed by
at least one bet.

There is also an equivalent graph-theoretic description. In the
direct product $K_3^{\times n}$, two vertices are adjacent exactly
when they differ in every coordinate, so $B_r$ is the open
neighborhood of $r$. Consequently~\cite{Brink2011},
\begin{equation}
 \tau_n
 =
 T(n,3)
 =
 \gamma_t\!\left(K_3^{\times n}\right),
\end{equation}
where $\gamma_t$ denotes the total domination number, i.e., the
minimum number of vertices whose open neighborhoods cover all
vertices of the graph.

For small $n$, substantially sharper information is available from
the covering literature. The known exact values are~\cite{Brink2011,KolevBaicheva2013}
\begin{equation}
 \bigl(T(n,3)\bigr)_{n=1}^{9}
 =
 (2,3,5,8,12,18,29,44,68).
 \label{eq:football-values-app}
\end{equation}

Using these exact values, \cref{tab:covering-small-app} compares the
winning-rectangle lower bound with the two-row-covering upper bound
for the first nine copy numbers.

\subsection{Asymptotic covering bound}
\label{app:covering-rate}

We now turn from the exact small-copy values of $T(n,3)$ to a general
bound that determines its asymptotic growth.

\begin{proof}[Proof of Lemma~\ref{lem:covering-bound-main}]
By the construction of Appendix~\ref{app:covering-finite}, each box
$B_r$ contains $2^n$ of the $3^n$ possible inputs. Any cover therefore
requires at least $3^n/2^n$ boxes. Since $T(n,3)$ is an integer,
\begin{equation}
 T(n,3)
 \geq
 \left\lceil
 \left(\frac32\right)^n
 \right\rceil.
\end{equation}

For the upper bound, choose $M$ excluded-row patterns independently
and uniformly from $\{0,1,2\}^n$. For any fixed input $x$, a random
pattern $r$ gives a box $B_r$ that covers $x$ with probability $p=\left(\frac23\right)^n$.
Hence, the probability that none of the $M$ chosen boxes covers $x$ is
$(1-p)^M$. Taking a union bound over all $3^n$ inputs gives
\begin{equation}
 \Prob[\text{some input is uncovered}]
 \leq
 3^n(1-p)^M
 <
 3^n e^{-Mp}.
 \label{eq:random-cover-probability-app}
\end{equation}

Choose
\begin{equation}
 M=M_n:=
 \left\lceil
 (n\ln3)\left(\frac32\right)^n
 \right\rceil.
\end{equation}
Since $M_np\geq n\ln3$, the last expression in
\cref{eq:random-cover-probability-app} is at most $1$. The strict
inequality therefore implies that the probability of leaving some
input uncovered is strictly less than $1$. Hence, there exists a
cover with at most $M_n$ boxes.

The random construction is used only to prove the existence of such a
cover. After removing any repeated patterns, we obtain a fixed cover
with at most $M_n$ distinct patterns. This proves the upper bound.
\end{proof}

\section{Explicit threshold comparison for MS}
\label{app:rao-jk}

Jain and Kundu combine Rao's threshold parallel-repetition bound
with a round-by-round message-guessing argument to obtain the scaling
$2^{-\Omega(\eta^3 n)+C}$ for interactive communication
(see Corollary~7.3 of Ref.~\cite{JainKundu2025}). Here, we make this comparison quantitative by
evaluating the explicit intermediate estimate in the proof of Rao's
concentration bound~\cite{Rao2011} and combining it with
the same message-guessing argument.

Under the uniform product input distribution and for
$0<\eta<1/9$, let $p_{n,\eta}^{(C)}$ denote the maximum probability
that a randomized fully interactive classical strategy, whose
communication cost is at most $C$ on every transcript, wins more
than an $(8/9+\eta)$ fraction of the $n$ copies. Define
\begin{equation}
 \Gamma_{\MS}^{\mathrm R}(\eta)
 :=
 \frac{-(\eta/100)^2
 \log_2\!\left[
 1-\dfrac{\eta/2}{8/9+3\eta/4}
 \right]}
 {4+(\eta/100)^2-\log_2(8/9+\eta/4)}.
 \label{eq:rao-rate-app}
\end{equation}
The resulting finite-$n$ bound is the following.

\begin{proposition}[Explicit threshold bound with communication]
\label{prop:rao-jk-app}
For every positive integer $n$, every $0<\eta<1/9$, and every
$C\geq0$,
\begin{equation}
 p_{n,\eta}^{(C)}
 \leq
 2^{C-\Gamma_{\MS}^{\mathrm R}(\eta)n+1}.
 \label{eq:rao-jk-app}
\end{equation}
Consequently, if a protocol wins more than an
$(8/9+\eta)$ fraction of the copies with probability at least
$\omega\in(0,1]$, then
\begin{equation}
 C
 \geq
 \Gamma_{\MS}^{\mathrm R}(\eta)n
 +
 \log_2\omega
 -1.
 \label{eq:rao-jk-cost-app}
\end{equation}
\end{proposition}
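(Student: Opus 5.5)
Having a communication-free concentration bound for parallel MS in hand, we obtain the statement by the standard message-guessing reduction. The proof has three steps.

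\emph{Step 1: the zero-communication bound.} We first establish the $C=0$ case,
\begin{equation}
 p_{n,\eta}^{(0)}\leq 2^{-\Gamma_{\MS}^{\mathrm R}(\eta)n+1}.
\end{equation}
To do this, we would follow Rao's proof of the concentration (threshold) version of parallel repetition for projection-type games. MS is a projection game: given Alice's row answer, the intersection bit is fixed, and only a single compatible column triple remains up to parity. The proof runs as follows.
\begin{itemize}
\item Suppose a local strategy wins more than $(8/9+\eta)n$ copies with probability $p$.
\item Choose a random subset of roughly $(\eta/100)^2 n$ coordinates and condition on winning those coordinates.
\item Because the input distribution is a product, the divergence and correlated-sampling argument shows that some further coordinate is won with conditional probability close to $8/9+3\eta/4$.
\item This would contradict $\omega_{\mathrm c}=8/9$ unless $p$ is exponentially small.
\end{itemize}
Tracking the constants in Rao's intermediate estimate gives $\log_2(1/p)\geq \Gamma_{\MS}^{\mathrm R}(\eta)n-1$. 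The terms in \cref{eq:rao-rate-app} arise as follows:
\begin{itemize}
\item The numerator is the per-conditioning-round gain $-\log_2[1-(\eta/2)/(8/9+3\eta/4)]$, multiplied by the sampling rate $(\eta/100)^2$.
\item The denominator collects the per-coordinate entropy cost. The constant $4$ bounds $\log_2$ of the relevant answer and question alphabets, and $-\log_2(8/9+\eta/4)$ accounts for conditioning on the win events.
\end{itemize}
Since MS has uniform product inputs on $3\times3$ and the value $\omega_{\mathrm c}=8/9$ (already noted in the main text), every hypothesis of Rao's theorem holds. I expect this step to be the main obstacle. It does not require new ideas, but it means reconstructing Rao's explicit inequality chain, in particular the choice of the number of conditioned coordinates and the $+1$ slack from rounding. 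The approach is to quote his Lemma bounding the probability of winning $k$ further coordinates and then plug in MS parameters.

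\emph{Step 2: message guessing.} Next, take a randomized interactive protocol whose communication cost is at most $C$ on every transcript. Fix the shared randomness $\lambda$; this gives a deterministic reduced tree. By the Kraft identity \cref{eq:kraft-app}, the leaves carry weights $2^{-L(\ell)}\geq 2^{-C}$ that sum to one.

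We build a zero-communication strategy from it:
\begin{itemize}
\item Using additional shared randomness, the parties sample a leaf $\ell$ with probability $2^{-L(\ell)}$. Equivalently, at each node they choose a uniformly random symbol.
\item Each party checks whether their own messages along the path to $\ell$ agree with what their input would send, given the guessed messages of the other party.
\item If the transcript is consistent, both output according to $\ell$.
\end{itemize}
The guessed leaf equals the true transcript with probability $2^{-L(\ell_{x,y})}\geq 2^{-C}$. When the guess is correct, the outputs coincide with the original protocol's outputs on every copy.

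A party who detects an inconsistency cannot notify the other. Such a party simply outputs arbitrarily. This can only affect the event in which the guess is wrong, and on that event we claim nothing.

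\emph{Step 3: conclusion.} Combining the steps, the local strategy wins more than an $(8/9+\eta)$ fraction of the copies with probability at least $2^{-C}p$, where $p$ is the original protocol's success probability. Hence
\begin{equation}
 2^{-C}p_{n,\eta}^{(C)}\leq p_{n,\eta}^{(0)},
\end{equation}
and Step 1 gives \cref{eq:rao-jk-app}.

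For \cref{eq:rao-jk-cost-app}, set $p_{n,\eta}^{(C)}\geq\omega$ and take $\log_2$.
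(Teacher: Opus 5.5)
Your proposal is essentially the paper's proof: the same shared-randomness message-guessing reduction (the guessed path is correct with probability $2^{-L(\ell)}\geq 2^{-C}$ independently of the inputs, so $p_{n,\eta}^{(C)}\leq 2^{C}p_{n,\eta}^{(0)}$), combined with Rao's explicit intermediate estimate $p_{n,\eta}^{(0)}\leq 2\bigl(1-\frac{\eta/2}{1-\epsilon+3\eta/4}\bigr)^{\lceil t\rceil}$ evaluated at $\epsilon=1/9$, $c=4$, $s=\eta/100$. One correction to your Step~1 description, which does not affect the argument: MS is not a projection game, since for fixed inputs and Alice's triple, two of Bob's four odd-parity triples win; the result you are invoking is therefore Rao's concentration bound for general games, where $c=\log_2|A|+\log_2|B|=4$ is the answer length only (no question-alphabet term), and the $+1$ comes from the prefactor $2$ rather than from rounding, because $\lceil t\rceil\geq t$ only helps.
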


\begin{proof}
We first relate a bounded-communication strategy to one without
communication using the message-guessing argument of Jain and
Kundu~\cite{JainKundu2025}, adapted to our communication-cost
convention. Fix a deterministic protocol whose probability of
winning more than an $(8/9+\eta)$ fraction of the copies is $p$,
and suppose that every transcript has communication cost at most
$C$. Using shared randomness, the parties guess the message sent
at each internal node $v$, choosing uniformly from its $m(v)$
possible symbols. Since the guesses are shared, both parties can
follow the guessed transcript without communicating. If the guessed
transcript is inconsistent with a party's input, the corresponding
output may be chosen arbitrarily.

For an input whose original transcript is $\ell$, all guesses along
that transcript are correct with probability
\begin{equation}
 \prod_{v\in\operatorname{path}(\ell)}
 \frac1{m(v)}
 =
 2^{-L(\ell)}
 \geq
 2^{-C}.
 \label{eq:guess-probability-app}
\end{equation}
Whenever all guesses are correct, the no-communication strategy
reproduces the original protocol. Its probability of exceeding the
winning threshold is therefore at least $2^{-C}p$. Since
$p_{n,\eta}^{(0)}$ is the optimal threshold-winning probability
without communication,
\begin{equation}
 p
 \leq
 2^C p_{n,\eta}^{(0)}.
\end{equation}
The same argument applies to randomized protocols by first fixing the
protocol randomness and then averaging. Hence,
\begin{equation}
 p_{n,\eta}^{(C)}
 \leq
 2^C p_{n,\eta}^{(0)}.
 \label{eq:guess-factor-app}
\end{equation}

It remains to control the threshold-winning probability without
communication. For this, we use the explicit estimate from the proof
of Theorem~7 in Ref.~\cite{Rao2011}. For a two-player game of
classical value $1-\epsilon$ and answer length
$c:=\log_2|A|+\log_2|B|$, set
\begin{equation}
 s:=\frac{\eta}{100},
 \qquad
 t:=
 \frac{s^2n}
 {c+s^2-\log_2(1-\epsilon+\eta/4)}.
\end{equation}
With this choice, the proof in Ref.~\cite{Rao2011} gives
\begin{equation}
 p_{n,\eta}^{(0)}
 \leq
 2
 \left(
 1-\frac{\eta/2}{1-\epsilon+3\eta/4}
 \right)^{\lceil t\rceil}.
 \label{eq:rao-general-app}
\end{equation}
Since the quantity in parentheses lies between zero and one and
$\lceil t\rceil\geq t$, it follows that
\begin{equation}
 p_{n,\eta}^{(0)}
 \leq
2^{-\Gamma n+1},
\end{equation}
where
\begin{equation}
 \Gamma
 :=
 \frac{-s^2
 \log_2\!\left(
 1-\dfrac{\eta/2}{1-\epsilon+3\eta/4}
 \right)}
 {c+s^2-\log_2(1-\epsilon+\eta/4)}.
\end{equation}

For MS, $\epsilon=1/9$ and each party has four
possible answers, giving
$c=\log_2|A|+\log_2|B|=4$.
The coefficient $\Gamma$ therefore reduces precisely to
$\Gamma_{\MS}^{\mathrm R}(\eta)$ in
\cref{eq:rao-rate-app}. Combining this with
\cref{eq:guess-factor-app} proves \cref{eq:rao-jk-app}.
If the threshold event occurs with probability at least $\omega$,
then $\omega\leq p_{n,\eta}^{(C)}$, and rearranging
\cref{eq:rao-jk-app} gives \cref{eq:rao-jk-cost-app}.
\end{proof}

We now consider the event that all $n$ copies are won.
If this event occurs with probability at least $\omega$, then
\cref{eq:rao-jk-cost-app} holds for every $\eta<1/9$.
Taking $\eta\to1/9$ from below gives
\begin{equation}
 C_{\mathrm{wc},\mu_{\MS}^{\otimes n}}^{\leftrightarrow,(\omega)}
 (\MS^{\otimes n})
 \geq
 \pos{
 \gamma_{\MS}^{\mathrm R}n
 +
 \log_2\omega -1
 },
 \label{eq:rao-jk-all-copy-app}
\end{equation}
where
\begin{equation}
 \gamma_{\MS}^{\mathrm R}
 :=
 \lim_{\eta\to1/9^-}
 \Gamma_{\MS}^{\mathrm R}(\eta)
 \approx
 2.54031\times10^{-8}.
 \label{eq:rao-coefficient-app}
\end{equation}
For perfect winning, $\omega=1$, this becomes
\begin{equation}
 C_{\mathrm{wc},\mu_{\MS}^{\otimes n}}^{\leftrightarrow}
 (\MS^{\otimes n})
 \geq
 \pos{
 \gamma_{\MS}^{\mathrm R}n-1
 }.
 \label{eq:ms-rao-jk-main}
\end{equation}
The resulting lower bound becomes nonzero only when
$n\gtrsim3.937\times10^7$ and exceeds one bit only when
$n\gtrsim7.873\times10^7$.

\section{Hierarchy for input-average communication}
\label{app:hierarchy}

\subsection{Input-average cost beyond the one-bit polytope}
\label{app:PR}

Consider the three-input Popescu--Rohrlich correlation~\cite{Popescu_PRBox_1994,Barrett:PRA:2005}
\begin{equation}
 P_{\mathrm{PR},3}(a,b|x,y)
 =
 \begin{cases}
 \frac12, & a\oplus b=g(x,y),\\
 0, & \text{otherwise},
 \end{cases}
\end{equation}
where $a,b\in\{0,1\}$, $x,y\in\{0,1,2\}$, and
\begin{equation}
 g(x,y)
 =
 \begin{cases}
 1,
 &
  xy\in\{1,2\},
 \\
 0,
 &
 \text{otherwise}.
 \end{cases}
\end{equation}
Let $P_{\mathrm w}(a,b|x,y)=1/4$ for all $a,b,x,y$ and define
\begin{equation}
 \vec{P}_L
 :=
 \frac12\vec{P}_{\mathrm{PR},3}
 +
 \frac12\vec{P}_{\mathrm w}.
\end{equation}
The correlation $\vec{P}_L$ is local. To see this explicitly,
consider the following four deterministic sign assignments, ordered
as $(A_0,A_1,A_2;B_0,B_1,B_2)$:
\begin{equation}
\begin{aligned}
 &(1,1,1;\,1,-1,1),\qquad
 (1,1,-1;\,1,1,-1),\\
 &(1,-1,1;\,1,1,1),\qquad
 (1,-1,-1;\,-1,1,1).
\end{aligned}
\end{equation}
Here the signs encode the binary outputs according to
$A_x=(-1)^a$ and $B_y=(-1)^b$. Using these assignments with equal weights, together with their
simultaneous outcome relabelings, gives uniform marginals and
correlators
\begin{equation}
 E_{xy}(\vec{P}_L)
 :=
 \sum_{a,b}(-1)^{a\oplus b}P_L(a,b|x,y)
 =
 \frac12(-1)^{g(x,y)}.
\end{equation}
This reproduces the full correlation $\vec{P}_L$.

We now move from this local point toward
$\vec{P}_{\mathrm{PR},3}$ along the line
\begin{equation}
 \vec{P}_c
 :=
 c\vec{P}_{\mathrm{PR},3}
 +(1-c)\vec{P}_L,
 \qquad
 0<c\leq1.
 \label{eq:noisy-pr-app}
\end{equation}
We first determine when this line leaves the Alice-to-Bob one-bit
polytope~\cite{ZC:Entropy:2019}. Consider the correlator expression defined by
\begin{equation}
 M
 :=
 \begin{pmatrix}
 2&2&1\\
 2&-1&-2\\
 1&-2&2
 \end{pmatrix},
\end{equation}
through $\sum_{x,y}M_{xy}E_{xy}(\vec{P})$.
Every deterministic Alice-to-Bob one-bit correlation is specified by
signs $A_x\in\{+1,-1\}$, a message $m(x)\in\{0,1\}$, and receiver
signs $B_{y,k}\in\{+1,-1\}$. Optimizing over the receiver signs gives
\begin{align}
 \sum_{x,y}M_{xy}A_xB_{y,m(x)}
 &\leq
 \sum_{y,k}
 \left|
 \sum_{x:m(x)=k}M_{xy}A_x
 \right|
 \nonumber\\
 &\leq11.
 \label{eq:one-bit-pr-inequality-app}
\end{align}
The final maximum follows by exhausting the eight sign strings $A$
and the eight binary message functions $m$. By convexity, every
randomized Alice-to-Bob one-bit correlation therefore satisfies
\begin{equation}
 \sum_{x,y}M_{xy}E_{xy}(\vec{P})\leq11.
 \label{eq:one-bit-pr-bell-app}
\end{equation}

For the correlations on the line in
\cref{eq:noisy-pr-app}, we obtain
\begin{equation}
 E_{xy}(\vec{P}_c)
 =
 \frac{1+c}{2}(-1)^{g(x,y)}
 \end{equation}
 and
 \begin{equation}
 \sum_{x,y}M_{xy}(-1)^{g(x,y)}
 =15.
\end{equation}
Substituting these expressions into
\cref{eq:one-bit-pr-bell-app} gives $ c\leq\frac7{15}$.
Thus, $\vec{P}_c$ lies outside the Alice-to-Bob one-bit polytope
whenever $c>7/15$.

We next compare this one-bit restriction with the input-average
communication cost. The correlation $\vec{P}_{\mathrm{PR},3}$ can be
simulated by having Alice send her three-valued input to Bob, at a
communication cost of $\log_2 3$ bits, whereas $\vec{P}_L$ requires
no communication. Using shared randomness to choose between these
two simulations according to \cref{eq:noisy-pr-app} therefore gives
\begin{equation}
 \Cb_\mu^{\to}(\vec{P}_c)
 \leq
 c\log_2 3.
\end{equation}
Thus, for
\begin{equation}
 \frac7{15}<c<\frac1{\log_2 3},
\end{equation}
the correlation $\vec{P}_c$ cannot be decomposed using only
deterministic correlations with communication cost at most one bit,
while its one-way input-average communication cost is strictly below
one bit.

\subsection{Deterministic-cost formulation and hierarchy}

We now justify the deterministic-cost formulation and the hierarchy
introduced in \cref{sec:hierarchy-main}. The exact formulation uses
the full deterministic set, whereas each finite hierarchy level
requires only the cost classes retained at that level.

For one-way communication from Alice to Bob, a deterministic
correlation has the form
\begin{equation}
 D(a,b|x,y)
 =
 \delta_{a,A(x)}
 \delta_{b,B(x,y)}.
 \label{eq:one-way-det-app}
\end{equation}
Its communication cost has a direct coding interpretation. For each
$x$, define Bob's response function
\begin{equation}
 B_x:y\longmapsto B(x,y).
\end{equation}
If two inputs $x$ and $x'$ induce the same response function,
$B_x=B_{x'}$, then Bob need not distinguish between them. Alice
therefore only needs to communicate which response class contains her
input. Hence, $c_\mu^{\to}(\vec D)$ is obtained by minimizing the
average path cost of a prefix tree encoding these response classes.

For two-way communication, $c_\mu^{\leftrightarrow}(\vec D)$ is
instead obtained by minimizing over deterministic interactive
protocol trees. In both cases, the existence of an optimal finite
protocol follows from Lemma~\ref{lem:finite-tree-hierarchy-app}.

The exact decomposition in \cref{eq:exact-hierarchy-main} follows by
fixing the protocol randomness, which turns any randomized simulator
into a convex mixture of deterministic protocols. Conversely, any
such convex decomposition can be implemented using shared randomness
and minimum-cost deterministic protocols for its components.

We now prove the hierarchy.

\begin{proof}[Proof of Theorem~\ref{Thm:LowerBoundHierarchy}]
First, take any exact decomposition in
\cref{eq:exact-hierarchy-main} and retain only the terms with cost
at most $\kappa_\mathfrak{L}$. This gives a feasible point of
\cref{eq:relaxed-hierarchy-main}. Since every discarded term has
cost at least $\kappa_{\mathfrak{L}+1}$, charging the discarded
weight at $\kappa_{\mathfrak{L}+1}$ cannot increase the objective.
Thus, every hierarchy level gives a lower bound on
$\Cb_\mu^\sigma(\vec{P})$.

For monotonicity, take any feasible point at level $\mathfrak{L}+1$ and discard
the terms in $\Dset_{\mathfrak{L}+1}^\sigma$. The remaining terms are feasible
at level $\mathfrak{L}$. At level $\mathfrak{L}$, all discarded weight is charged at
$\kappa_{\mathfrak{L}+1}$, whereas at level $\mathfrak{L}+1$ the residual weight is charged
at $\kappa_{\mathfrak{L}+2}\geq\kappa_{\mathfrak{L}+1}$. Hence, the level-$\mathfrak{L}$ objective
cannot exceed the level-$(\mathfrak{L}+1)$ objective.

It remains to consider the final level. Let
$\{q_{\vec{D}}\}$ be an optimal solution at level $\mathfrak{L}=K-1$, and define
\begin{align}
 \rho
 &:=
 1-
 \sum_{k=0}^{K-1}
 \sum_{\vec{D}\in\Dset_k^\sigma}q_{\vec{D}},
 \\
 \vec{R}
 &:=
 \vec{P}-
 \sum_{k=0}^{K-1}
 \sum_{\vec{D}\in\Dset_k^\sigma}q_{\vec{D}}\vec{D}.
\end{align}
If $\rho=0$, this is already an exact decomposition. Otherwise,
$\vec{R}/\rho$ is a normalized correlation compatible with the
communication model $\sigma$. By the simulation constructions
preceding in Appendix~\ref{app:finite-simulation-cost}, it admits
a decomposition
\begin{equation}
 \frac{\vec{R}}{\rho}
 =
 \sum_{\vec{E}\in\Dset_K^\sigma}r_{\vec{E}}\vec{E}.
\end{equation}
Since every deterministic correlation has cost at most $\kappa_K$,
combining this decomposition with the retained terms gives an exact
decomposition of $\vec{P}$ with cost at most
\begin{equation}
 \sum_{k=0}^{K-1}
 \sum_{\vec{D}\in\Dset_k^\sigma}
 q_{\vec{D}}\kappa_k
 +
 \rho\kappa_K.
\end{equation}
This is precisely the final-level objective. Together with the
lower-bound property above, this proves equality at level $K-1$.
\end{proof}

It remains to identify the first level more explicitly. As discussed in
\cref{sec:hierarchy-main}, $\kappa_0=0$ and, in the scenarios
considered here, $\kappa_1=1$. Since $\Dset_0^\sigma$ is precisely the
set of local deterministic correlations, the first relaxation becomes
\begin{equation}
\begin{aligned}
 \Cb_{\mu}^{\sigma,(\leq\kappa_0)}(\vec{P})
 =
 \min_{\{q_{\vec{D}}\geq0:\vec{D}\in\Dset_0^\sigma\}}
 &\quad
 1-\sum_{\vec{D}\in\Dset_0^\sigma}q_{\vec{D}}
 \\
 \text{such that}
 &\quad
 \vec{P}\geq
 \sum_{\vec{D}\in\Dset_0^\sigma}q_{\vec{D}}\vec{D}.
\end{aligned}
\label{eq:first-level-app}
\end{equation}
For a no-signaling target $\vec{P}$, the residual in
\cref{eq:first-level-app} is a subnormalized no-signaling correlation.
Hence, the optimization is equivalent to finding the EPR2 decomposition~\cite{EPR92}
\begin{equation}
	 \vec{P} =  q_L\vec{P}_L+(1-q_L)\vec{P}_{\mathrm{NS}},
\end{equation}
with the largest possible $q_L$, or equivalently smallest possible $1-q_L$. Thus, the first hierarchy level coincides
with the EPR2 nonlocal content.

\end{document}